\documentclass[12pt]{article}
\usepackage{amsmath, amssymb,latexsym}
\usepackage{color}
\title{Quadratic $\P\T$--symmetric operators with real spectrum and similarity to self-adjoint operators}
\author{Emanuela Caliceti\\Dipartimento di Matematica \\ Universit\`a di Bologna\\ 40127 Bologna, Italy\\ \small Emanuela.Caliceti@unibo.it \and
Sandro Graffi \\Dipartimento di Matematica \\ Universit\`a di Bologna\\ 40127 Bologna, Italy\\ \small Sandro.Graffi@unibo.it \and
Michael Hitrik\\Department of Mathematics \\University of California \\ Los Angeles
\\CA 90095-1555, USA\\\small hitrik@math.ucla.edu \and
Johannes Sj\"ostrand\\IMB, Universit\'e de Bourgogne\\9, Av. A. Savary, BP 47870\\FR--21078 Dijon, France \\and UMR 5584 CNRS
\\\small johannes.sjostrand@u-bourgogne.fr}
\date{}

\def\wrtext#1{\relax\ifmmode{\leavevmode\hbox{#1}}\else{#1}\fi}
\def\abs#1{\left|#1\right|}
\def\begeq{\begin{equation}}
\def\endeq{\end{equation}}

\def\iint{\int\hskip -2mm\int}

\def\P{{\mathcal P}}
\def\T{{\mathcal T}}
\def\CC{{\mathcal C}}

\def\Re{{\rm Re\,}}
\def\Im{{\rm Im\,}}

\textheight=20cm
\textwidth=15.0cm
\flushbottom
\topskip 0.3 in
\headsep 0pt
\footskip 54pt
\oddsidemargin 0.2cm
\evensidemargin 0.2cm

\def\part#1{\frac{\partial}{\partial #1}}
\def\half{\frac{1}{2}}

\newcommand{\real}{\mbox{\bf R}}
\newcommand{\comp}{\mbox{\bf C}}

\newcommand{\nat}{\mbox{\bf N}}

\renewcommand{\Re}{\mbox{\rm Re\,}}
\renewcommand{\Im}{\mbox{\rm Im\,}}

\renewcommand{\exp}{\mbox{\rm exp\,}}

\topmargin=0cm

\newtheorem{dref}{Definition}[section]

\newtheorem{theo}[dref]{Theorem}
\newtheorem{prop}[dref]{Proposition}

\newenvironment{proof}{\vspace{.3cm}\noindent{{\em Proof:}}}{\hfill$\Box$}

\begin{document}
\maketitle

\vspace*{1cm}
\noindent
{\bf Abstract}: It is established that a $\P\T$-symmetric elliptic quadratic differential operator with real spectrum is
similar to a self-adjoint operator precisely when the associated fundamental matrix has no Jordan blocks.

\vskip 2.5mm
\noindent {\bf Keywords and Phrases:} Non-self-adjoint operator, $\P\T$--symmetry, spectrum, quadratic differential operator, Jordan block, fundamental matrix,
FBI--Bargmann transform

\vskip 2mm
\noindent
{\bf Mathematics Subject Classification 2000}: 47A10, 35P05, 15A63, 53D22

\tableofcontents
\section{Introduction and statement of result}
\setcounter{equation}{0}
$\P\T$--symmetric operators in quantum mechanics (for general references on $\P\T$--symmetry and quantum mechanics, see e.g.\,~\cite{BBM}) are those operators
on $\real^n$ left invariant by a successive application of the parity operator $\P$, acting on  wave functions as
$$
(\P\psi)(x)=\psi((-1)^{j_1}x_1,\ldots,\, (-1)^{j_n}x_n),\quad j_k = 0,1,\,\,\wrtext{not all equal to}\,\, 0,
$$
and of the time-reversal symmetry, acting as $(\T\psi)(x)=\overline{\psi(x)}$.
The $\P\T$--symmetry of an operator is called {\it exact} if
its spectrum is purely real, see e.g.\,~\cite{BB}. Generally speaking, the reality of the spectrum, and thus the exact $\P\T$--symmetry, cannot follow by
a unitary equivalence to a
self-adjoint operator, since a $\P\T$--symmetric operator is not necessarily self-adjoint, and actually, the standard examples are not self-adjoint and
not even normal operators. However, it could follow by a {\it similarity} to a self-adjoint operator. Let us therefore
proceed now to discuss the notion of similarity for two unbounded linear operators. For the limited purposes of this paper, here we shall only consider a
rather particular abstract situation, introducing assumptions which will be satisfied in the specific instances below.

\medskip
\noindent
Let ${\cal H}_j$, $j=1,2$, be complex separable Hilbert spaces, and let ${\cal A}_j:{\cal H}_j \rightarrow {\cal H}_j$ be closed densely defined operators
such that for $j=1,2$, the spectrum ${\rm Spec}({\cal A}_j)\subset \comp$ is discrete, consisting of eigenvalues of finite algebraic multiplicity. When
$\lambda \in {\rm Spec}({\cal A}_j)$, we let $E_{\lambda}^{(j)}\subset {\cal D}({\cal A}_j)$ be the finite-dimensional space of generalized
eigenvectors, corresponding to ${\cal A}_j$, $\lambda$, so that
$$
E_{\lambda}^{(j)} = {\rm Ker}\left({\cal A}_j - \lambda\right)^N,
$$
if $N \geq N(\lambda,j)$ is large enough.

\medskip
\noindent
Let ${\cal S}_j \subset {\cal D}({\cal A}_j)$, $j=1,2$, be linear subspaces such that ${\cal A}_j ({\cal S}_j) \subset {\cal S}_j$, and
$E_{\lambda}^{(j)} \subset {\cal S}_j$, for each  $\lambda\in {\rm Spec}({\cal A}_j)$. Assume that $S: {\cal S}_1 \rightarrow {\cal S}_2$ is
a linear bijection such that
\begeq
\label{sim}
S {\cal A}_1  = {\cal A}_2 S\quad \wrtext{on}\,\, {\cal S}_1,
\endeq
or equivalently, such that
\begeq
\label{sim1}
{\cal A}_1 S^{-1} = S^{-1} {\cal A}_2\quad \wrtext{on}\,\, {\cal S}_2.
\endeq
We shall then say that the operators ${\cal A}_1$ and ${\cal A}_2$ are related by the similarity transformation $S$, or are {\it similar}.

\begin{prop}
The similar operators ${\cal A}_1$ and ${\cal A}_2$ are isospectral.
\end{prop}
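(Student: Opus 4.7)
The plan is to show that $S$ restricts to a linear bijection between the generalized eigenspaces of $\mathcal{A}_1$ and $\mathcal{A}_2$ attached to the same value $\lambda$. Since the spectrum of each $\mathcal{A}_j$ is assumed to be discrete and composed entirely of eigenvalues, equality of the spectra will follow immediately, and the algebraic multiplicities will match as a bonus.

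First I would iterate the intertwining identity (\ref{sim}). The invariance hypothesis $\mathcal{A}_j(\mathcal{S}_j) \subset \mathcal{S}_j$ is exactly what guarantees that the successive applications of $\mathcal{A}_1 - \lambda$ on $\mathcal{S}_1$ make sense and that (\ref{sim}) can be applied at each stage. A straightforward induction then gives, for every $N \geq 1$ and every $\lambda \in \comp$,
$$
S(\mathcal{A}_1 - \lambda)^N u = (\mathcal{A}_2 - \lambda)^N S u,\quad u \in \mathcal{S}_1.
$$
Applying this to $u \in E_\lambda^{(1)} \subset \mathcal{S}_1$ with $N$ large enough that $(\mathcal{A}_1 - \lambda)^N u = 0$, I obtain $(\mathcal{A}_2 - \lambda)^N S u = 0$, and hence $Su \in E_\lambda^{(2)}$. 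By the symmetric argument applied to $S^{-1}$ using (\ref{sim1}), one gets $S^{-1}(E_\lambda^{(2)}) \subset E_\lambda^{(1)}$.

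Combining these two inclusions with the bijectivity of $S: \mathcal{S}_1 \to \mathcal{S}_2$, I conclude that for each $\lambda \in \comp$ the map $S$ restricts to a linear isomorphism $E_\lambda^{(1)} \to E_\lambda^{(2)}$. In particular $E_\lambda^{(1)} \neq \{0\}$ if and only if $E_\lambda^{(2)} \neq \{0\}$; since the spectrum of each $\mathcal{A}_j$ is precisely the set of $\lambda$ with $E_\lambda^{(j)} \neq \{0\}$, this yields $\Spec(\mathcal{A}_1) = \Spec(\mathcal{A}_2)$.

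I do not anticipate any real obstacle here: the result is essentially formal, and the only point that requires a moment of attention is the legality of iterating the intertwining identity, which is precisely ensured by the invariance of $\mathcal{S}_j$ under $\mathcal{A}_j$ and by the inclusion $E_\lambda^{(j)} \subset \mathcal{S}_j$ built into the setup.
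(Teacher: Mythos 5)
Your proposal is correct and follows essentially the same route as the paper: iterate the intertwining relation (justified by the invariance of ${\cal S}_1$ under ${\cal A}_1$) to get $S(E_\lambda^{(1)})\subset E_\lambda^{(2)}$, then apply the symmetric argument to $S^{-1}$ via (\ref{sim1}) and conclude equality of the spectra (and of the algebraic multiplicities). No issues.
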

\begin{proof}
Let $u\in E_{\lambda}^{(1)}$, so that $({\cal A}_1-\lambda)^N u=0$, for some $N\in \nat$. It follows that $u, ({\cal A}_1-\lambda)u,\ldots\, \in {\cal S}_1$,
and we can apply (\ref{sim}) to see that $({\cal A}_2-\lambda)^N Su=0$, so $Su \in E_{\lambda}^{(2)}$. Thus we have shown that
$S(E_{\lambda}^{(1)}) \subset E_{\lambda}^{(2)}$. The same argument applied to $S^{-1}$ gives $E_{\lambda}^{(1)} \supset S^{-1}(E_{\lambda}^{(2)})$, and we get
${\rm Spec}({\cal A}_1)$ = ${\rm Spec}({\cal A}_2)$, and $S(E_{\lambda}^{(1)}) = E_{\lambda}^{(2)}$, for all
$\lambda \in {\rm Spec}({\cal A}_1) = {\rm Spec}({\cal A}_2)$.
\end{proof}

\bigskip
\noindent
Notice that in the discussion above, the operator ${\cal A}_2$ may be self-adjoint on ${\cal H}_2$, even though the operator ${\cal A}_1$ need not be
self-adjoint on ${\cal H}_1$. It is therefore natural to ask whether a $\P\T$--symmetric operator with real spectrum is always similar to a self-adjoint one.
This problem has attracted considerable attention, notably by~\cite{Mo1},~\cite{Mo2},~\cite{Mo3}, in an abstract framework. Here instead, we shall consider the class of $\P\T$--symmetric elliptic quadratic differential operators acting on $L^2(\real^n)$, with real spectrum,
and establish necessary and sufficient conditions for their similarity to a self-adjoint operator, as above. A noticeable peculiarity of these conditions
is their classical nature, i.e. their dependence only on the classical flow generated by the classical quadratic hamiltonian, whose quantization yields the given
$\P\T$--symmetric quadratic differential operator. Let us also remark here that in general, the operator $S$ in (\ref{sim}) realizing the similarity between the
non-self-adjoint operator ${\cal A}_1$ and the self-adjoint operator ${\cal A}_2$, cannot be expected to be bounded with a bounded inverse, due
to the issue of pseudospectra for non-self-adjoint operators, see~\cite{DeSjZw},~\cite{Da1},~\cite{BaZn}.

\bigskip
\noindent
Let $q(x,\xi)$ be a complex-valued quadratic form on $\real^{2n} = \real_x^n \times \real^n_{\xi}$. Here $(x,\xi)\in \real^{2n}$ are the canonical coordinates,
so that $\{\xi_i,x_j\}=\delta_{ij}$, $1\leq i,j\leq n$. Throughout this work, we shall assume that the quadratic form $q$ is elliptic on $\real^{2n}$, in the sense
that $q(X)=0$, $X\in \real^{2n}$, if and only if $X=0$. An application of Lemma 3.1 of~\cite{Sj74} then shows, if $n>1$, that there exists
$z\in \comp\backslash\{0\}$ such that $\Re(zq)$ is positive definite. In the case when $n=1$, the same conclusion holds, provided that the range of
$q$ on $\real^2$ is not all of $\comp$, which will be assumed in the sequel. After a multiplication by $z$, we shall assume in what follows, as we may, that
$z=1$, so that $\Re q$ is positive definite. It follows that the range $\Sigma(q):= q(\real^{2n})$ of $q$ on $\real^{2n}$ is a closed angular sector with a
vertex at zero, contained in the union of $\{0\}$ and the open right half-plane.

\medskip
\noindent
We shall now introduce the assumption of the $\P\T$-symmetry of the quadratic symbol $q$. To that end, let
$\kappa: {\real}^n \rightarrow \real^n$ be linear and such that
\begeq
\label{eq_inv}
\kappa^2 = 1.
\endeq
Associated to the involution $\kappa$, we have the parity operator ${\cal P}$, given by
\begeq
\label{eq_par}
({\cal P} u)(x) = u(\kappa(x)),
\endeq
and the lift of $\kappa$ to the cotangent space, given by the linear involution,
\begeq
\label{eq_lift}
{\cal K}: \real^{2n}\rightarrow \real^{2n}, \quad {\cal K}(x,\xi)= (\kappa(x),-\kappa^{t}(\xi)).
\endeq

\medskip
\noindent
The $\P\T$-symmetry assumption on the quadratic symbol $q$ is of the following form,
\begeq
\label{PT}
\overline{q\circ{\cal K}} =q \Longleftrightarrow  \overline{q(\kappa(x),-\kappa^t(\xi))} =q(x,\xi),\quad (x,\xi)\in \real^{2n}.
\endeq
It follows, in particular, that the sector $\Sigma(q)$ is symmetric with respect to reflection in the real axis.

\bigskip
\noindent
Associated to the symbol $q$ is the corresponding operator ${\rm Op}^w(q)$ acting on $L^2(\real^n)$, formally defined as the Weyl quantization of $q(x,\xi)$,
\begeq
\label{WQ}
{\rm Op}^w(q)u(x) =  (2\pi)^{-n}\int\!\!\!\int_{{\bf R}^n\times{\bf R}^n} e^{i\xi\cdot(x-y)} q\left(\frac{x+y}{2},\xi\right)u(y)\,dy\,d\xi.
\endeq
Writing
\begeq
\label{eq1.1}
q(x,\xi) = \sum_{|\alpha+\beta|=2}q_{\alpha,\beta}x^\alpha \xi^\beta,
\endeq
we see that the operator ${\rm Op}^w(q)$ is a differential operator of the form
$$
{\rm Op}^w(q) = \sum_{|\alpha+\beta|=2} q_{\alpha,\beta} \frac{x^{\alpha} D_x^{\beta} + D_x^{\beta}\,x^{\alpha}}{2}.
$$

\medskip
\noindent
Let us recall from~\cite{Ho95} that the maximal closed realization of the operator ${\rm Op}^w(q)$, i.e.\, the operator on $L^2(\real^n)$ equipped with the domain
$$
\{u\in L^2(\real^n);\, {\rm Op}^w(q)u\in L^2(\real^n)\} = \{u \in L^2(\real^n);\,x^{\alpha} D_x^{\beta} u \in L^2(\real^n),\, \abs{\alpha+\beta}\leq 2\},
$$
agrees with the graph closure of its restriction to ${\cal S}(\real^n)$,
$$
{\rm Op}^w(q): {\cal S}(\real^n) \rightarrow {\cal S}(\real^n).
$$
Furthermore, the spectrum of ${\rm Op}^w(q)$ is discrete, and its precise description will be recalled below.

\medskip
\noindent
The assumption (\ref{PT}) implies the $\P\T$--symmetry of the operator ${\rm Op}^w(q)$, i.e. the commutation property $[{\rm Op}^w(q),\P\T]=0$. Here the
parity operator ${\cal P}$ has been introduced in (\ref{eq_par}). Indeed, we have
\begin{eqnarray*}
&&
\P\circ {\rm Op}^w(q) \circ \P^{-1} = {\rm  Op}^w(q(\kappa(x),\kappa^t(\xi))); \quad \T\circ {\rm Op}^w (q) \circ \T^{-1}={\rm Op}^w(\overline{q}(x,-\xi)),
\\
&&
 \P\T\circ {\rm Op}^w(q) \circ (\P\T)^{-1}={\rm Op}^w(\overline{q}(\kappa(x),-\kappa^t(\xi))).
\end{eqnarray*}
It follows that the spectrum of ${\rm Op}^w(q)$ is symmetric with respect to the real axis.

\bigskip
\noindent
The essential role in what follows will be played by the fundamental matrix $F$ of the quadratic form $q$. When recalling the definition of $F$
following Section 21.5 of~\cite{H_book}, we let
$$
\sigma((x,\xi),(y,\eta)) = \xi \cdot y - \eta \cdot x, \quad (x,\xi) \in \real^{2n},\quad (y,\eta)\in \real^{2n}, $$
be the canonical symplectic form on $\real^{2n}$, which extends to the complex symplectic form on $\comp^{2n}$. Letting also $q(X,Y)$ stand for the polarization
of $q$, viewed as a symmetric bilinear form on $\comp^{2n}$, we define the $2n\times 2n$ fundamental matrix $F$ by the identity
\begeq
\label{eq1.3}
q(X,Y) = \sigma(X,FY),\quad X,Y\in \comp^{2n}.
\endeq
We notice that the fundamental matrix $F$ is skew-symmetric with respect to $\sigma$, and furthermore, following~\cite{H_book},
we see that in the canonical coordinates $(x,\xi)$, it is given by
\begeq
\label{eq1.4}
F=\frac{1}{2} \left(\begin{array}{ll} q''_{\xi,x} & q''_{\xi,\xi}\\ -q''_{x,x} & -q''_{x,\xi} \end{array}\right).
\endeq

\bigskip
\noindent
We can now state the main result of this work.
\begin{theo}
\label{mth}
Let $q: \real_x^{n}\times \real^n_{\xi} \rightarrow \comp$ be a quadratic form, such that $\Re q$ is positive definite. Let $\kappa: \real^n \rightarrow
\real^n$ be linear and such that $\kappa^2=1$. Assume the property of the ${\cal PT}$--symmetry,
$$
\overline{q(\kappa(x),-\kappa^t(\xi))} = q(x,\xi),\quad (x,\xi)\in \real^{2n},
$$
and that the spectrum of the quadratic operator ${\rm Op}^w(q)$ is purely real. Then the operator ${\rm Op}^w(q)$ is similar to a self-adjoint operator,
in the sense of the discussion in the beginning of this section,
precisely when the fundamental matrix $F$ of $q$ has no Jordan blocks.
\end{theo}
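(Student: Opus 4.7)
The plan is to prove both directions of the equivalence by comparing the Jordan structure of the fundamental matrix $F$ with that of the operator ${\rm Op}^w(q)$ on its finite-dimensional generalized eigenspaces, together with the observation that a self-adjoint operator with discrete spectrum carries only trivial Jordan structure.

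\textbf{``Similar to self-adjoint'' $\Rightarrow$ ``$F$ has no Jordan blocks''.} Any self-adjoint operator ${\cal A}_2$ satisfies $\|({\cal A}_2-\lambda)u\|^2=\langle ({\cal A}_2-\lambda)^2 u,u\rangle$ for real $\lambda$, so $({\cal A}_2-\lambda)^2 u=0$ forces $({\cal A}_2-\lambda)u=0$, and by iteration every generalized eigenspace of ${\cal A}_2$ equals the geometric eigenspace. Proposition 1.1 shows that the similarity $S$ restricts to a linear bijection between the generalized eigenspaces of ${\rm Op}^w(q)$ and those of ${\cal A}_2$, intertwining the operators, so the same triviality must hold on the ${\rm Op}^w(q)$ side. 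It thus suffices to show that a Jordan block of $F$ forces ${\rm Op}^w(q)$ to have a genuine nilpotent part on some generalized eigenspace. This is a consequence of the Sj\"ostrand normal form (cf.\ Theorem 21.5.3 of H\"ormander Vol.\ III): when $F$ is put in Jordan normal form, ${\rm Op}^w(q)$ is conjugate by a (complex) metaplectic operator to a decomposition into ``elementary'' quadratic operators, one per Jordan block, and an elementary block coming from a nontrivial Jordan block of $F$ admits explicit generalized eigenfunctions in ${\cal S}(\real^n)$ that are not eigenfunctions, ruling out similarity to a self-adjoint operator.

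\textbf{``$F$ has no Jordan blocks'' $\Rightarrow$ ``similar to self-adjoint''.} The same spectral analysis now gives the clean formula
$$
{\rm Spec}({\rm Op}^w(q))=\left\{\sum_{j=1}^n \frac{\lambda_j}{i}(2\alpha_j+1)\,:\, \alpha\in\nat^n\right\},
$$
where $\pm\lambda_j$ are the nonzero eigenvalues of $F$ and the sign is fixed by requiring $-i\lambda_j$ to lie in the open right half-plane (this being forced by $\Re q$ positive definite). Reality of the spectrum, obtained by taking $\alpha=0$ and $\alpha=e_k$ and subtracting, forces each $\lambda_j/i\in\real$, hence $\lambda_j=i\mu_j$ with $\mu_j>0$. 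Since $F$ is diagonalizable, a symplectically normalized choice of complex eigenvectors produces a complex linear symplectic change of coordinates $\kappa:\comp^{2n}\to\comp^{2n}$ bringing $q$ to the normal form $q\circ\kappa^{-1}=\sum_{j=1}^n \mu_j(y_j^2+\eta_j^2)/2$. Quantizing $\kappa^{-1}$ as an FBI--Bargmann transformation $U$ and verifying on the invariant dense subspace ${\cal S}(\real^n)$ that $U\,{\rm Op}^w(q)={\rm Op}^w(q\circ\kappa^{-1})\,U$ yields the required similarity, the target being the manifestly self-adjoint harmonic oscillator $\sum_j \mu_j(y_j^2+D_{y_j}^2)/2$.

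\textbf{Main obstacle.} The principal difficulty lies in the ``if'' direction, specifically in realizing the intertwiner $U$ when the required symplectic change $\kappa$ is genuinely non-real. In that case $U$ is not bounded on $L^2(\real^n)$, so one has to identify the correct target Hilbert space ${\cal H}_2$ (a weighted Bergman space attached to an explicit strictly plurisubharmonic quadratic weight) together with an invariant dense subspace ${\cal S}_2$ such that both (\ref{sim}) and (\ref{sim1}) hold and the transformed operator is bona fide self-adjoint on ${\cal H}_2$. Once the correct Hilbert-space pair is chosen, verification of invariance, bijectivity, and the intertwining on Schwartz data is mechanical. The ${\cal P}{\cal T}$--symmetry is not, strictly speaking, needed to obtain $\lambda_j\in i\real$ (which already follows from real spectrum plus diagonalizability), but it enforces an additional conjugation structure on the eigenspaces of $F$ that makes the $\mu_j$ real with the correct sign and lets $S$ be chosen in a ${\cal P}{\cal T}$--compatible canonical form.
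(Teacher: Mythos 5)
Your proposal follows the same broad strategy as the paper, but there is a genuine gap in the ``if'' direction, precisely at the point you yourself flag as the main obstacle, and that gap does not close the way you suggest.

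After quantizing a complex canonical transformation $\kappa$ bringing $\Lambda^+ = \bigoplus_{\Im\lambda>0} E_\lambda$ and $\Lambda^-$ to $\{\xi=0\}$ and $\{x=0\}$ respectively, what one obtains is a \emph{unitary} FBI--Bargmann map $T: L^2(\real^n) \to H_{\Phi_1}(\comp^n)$, where the strictly convex weight $\Phi_1$ is determined by where $\kappa_T$ sends $\real^{2n}$ (namely to $\Lambda_{\Phi_1}$), and this is in general \emph{not} the standard radial weight $\Phi(x)=|x|^2/2$. The conjugated operator is ${\rm Op}^w(\widetilde q) = \sum_j 2i\mu_j x_j\partial_{x_j} + \mathrm{const}$, whose eigenfunctions are the monomials $x^\alpha$. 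On $H_{\Phi_1}$ the monomials are \emph{not} orthogonal, and ${\rm Op}^w(\widetilde q)$ is not self-adjoint there; it is only self-adjoint on the standard Bargmann space $H_\Phi$, where the monomials form an orthogonal basis. So ``identify the correct weighted Bergman space'' is not merely bookkeeping: the unitary FBI transform lands you in the wrong one, and moving from $H_{\Phi_1}$ to $H_\Phi$ cannot be done unitarily or even boundedly. This is where the paper's decisive step occurs: both spaces contain the holomorphic polynomials as a dense invariant subspace, and the \emph{unbounded} identity map on polynomials supplies the similarity $S: H_{\Phi_1}\to H_\Phi$ in the sense of (\ref{sim}), (\ref{sim1}). (Density in $H_{\Phi_1}$ is itself nontrivial and rests on density of the generalized eigenfunctions of ${\rm Op}^w(q)$ in $L^2(\real^n)$.) Your write-up conflates ``the quantization $U$ is unbounded'' with ``$U$ goes to a Bergman space,'' but $U$ is actually unitary onto $H_{\Phi_1}$; the unboundedness enters one step later. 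As a consequence your claimed target ``the manifestly self-adjoint harmonic oscillator $\sum_j\mu_j(y_j^2+D_{y_j}^2)/2$'' cannot be reached by the quantization of $\kappa^{-1}$ alone, and ``verification of invariance, bijectivity, and the intertwining on Schwartz data is mechanical'' understates exactly the nontrivial part.

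The ``only if'' direction is conceptually the same as the paper's, though where you invoke H\"ormander's symplectic classification the paper reproves what is needed directly via the FBI--Bargmann normal form (Proposition 3.1), which keeps the argument self-contained and also tracks the Hilbert space carefully. One small correction to your closing remark: real spectrum \emph{alone} already forces $\lambda_j\in i\real$ via the explicit spectral formula (\ref{eq2.1}) (take $\nu=0$ and $\nu=e_k$ and subtract), with no appeal either to ${\cal PT}$-symmetry or to diagonalizability of $F$; the positivity $\mu_j>0$ is then automatic from $\lambda_j/i\in\Sigma(q)\setminus\{0\}$. The ${\cal PT}$-hypothesis in the theorem is essentially the context motivating why real spectrum arises, not a mathematically needed ingredient for the equivalence itself, a point the paper itself touches on in Remark 3.3.
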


\bigskip
\noindent
{\bf Remarks}.
\begin{enumerate}
\item
Let us consider the Hamilton vector field of $q$,
\begeq
H_q =q'_{\xi}\cdot \partial_x-q'_x\cdot \partial_\xi.
\endeq
According to (\ref{eq1.4}), we have
$$
F Y = \frac{1}{2} H_q(Y).
$$
Therefore the possibility of establishing a similarity between a ${\cal PT}$--symmetric elliptic quadratic operator with purely real spectrum and a
self-adjoint operator involves only the underlying classical flow, i.e. is determined by a purely classical condition.
\item Examples of $\P\T$--symmetric non-diagonalizable hamiltonians with real spectrum can be found in \cite{CGrSj}, as well as in
\cite{ACS}, \cite{SAC}, \cite{CIN}. See also the discussion in Section 4.
\item In Theorem \ref{mth} we have only considered the case when the quadratic form $q$ is elliptic on $\real^{2n}$. It seems quite likely, however, that the
result of Theorem \ref{mth} can be extended to a suitable class of partially elliptic quadratic operators, namely the one studied in~\cite{HiPr1}, \cite{Viola2}.
We leave this extension open until needed.
\end{enumerate}

\medskip
\noindent
The plan of this note is as follows. In Section 2, we establish a symmetry pro\-perty of the fundamental matrix $F$, as a consequence of the assumption of
the ${\cal PT}$--symmetry of $q$. The proof of Theorem \ref{mth} is then carried out in Section 3, using the techniques of FBI--Bargmann
transformations with quadratic phases~\cite{Sj82}, which, in the quadratic case, allow us to construct the similarity transformation $S$ explicitly.
Section 4 is devoted to the discussion of an example, due to~\cite{CIN}, of an elliptic quadratic $\P\T$--symmetric operator with real spectrum,
for which the fundamental matrix possesses Jordan blocks. The appendix A gives a concise exposition of aspects of the theory of positive
Lagrangian planes, required in the proof of the main result, while in the appendix B, the general theory of quadratic Fourier integral operators in the
complex domain, arising when quantizing complex linear canonical transformations, is developed.

\bigskip
\noindent
{\bf Acknowledgements}. Johannes Sj\"ostrand acknowledges support from the French ANR grant NOSEVOL: ANR 2011 BS01019 01. 
We would also like to thank the referees for very helpful remarks and suggestions. This led to a generalization of our main result.

\section{$\P\T$--invariance condition for the fundamental matrix}
\setcounter{equation}{0}
In this section, we let $q: \real^{2n} \rightarrow \comp$ be a quadratic form such that $\Re q$ is positive definite. It follows therefore from
(\ref{eq1.3}) that the eigenvalues of the corresponding fundamental matrix $F$ avoid the real axis, and in general we know from Section 21.5 of~\cite{H_book}
that if $\lambda$ is an eigenvalue of $F$, then so is $-\lambda$, and the algebraic multiplicities agree. Let us also recall from~\cite{H_book}
that the eigenvalues of $F$ belong to the set $i\Sigma(q) \cup -i\Sigma(q)$, and let $\lambda_1,\ldots, \lambda_n$ be the eigenvalues of $F$,
counted according to their multiplicity, such that $\lambda_j/i\in \Sigma(q)$, $j=1,\ldots, n$. From~\cite{Boutet},~\cite{Sj74}, we know that
the spectrum of the operator ${\rm Op}^w(q)$ is given by the eigenvalues of the form
\begeq
\label{eq2.1}
\sum_{j=1}^n \frac{\lambda_j}{i} \left(2\nu_{j,\ell}+1\right),\quad \nu_{j,\ell}\in \nat\cup\{0\}.
\endeq
Notice that ${\rm Spec}({\rm Op}^w(q))\subset \Sigma(q)$.

\bigskip
\noindent
Let us recall the real linear involution ${\cal K}$, introduced in (\ref{eq_lift}). Writing ${\cal K} = J \circ {\cal K}_1$, where $J(x,\xi) = (x,-\xi)$ and
${\cal K}_1(y,\eta) = (\kappa(y), (\kappa^{t})^{-1}(\eta))$ is symplectic, we see, using also that the involution $J$ is skew-symmetric with respect to the
symplectic form $\sigma$, that the map ${\cal K}$ is antisymplectic in the sense that
$$
\sigma({\cal K}X, {\cal K}Y) = - \sigma(X,Y),\quad X,Y\in \real^{2n}.
$$
In particular,
$$
\sigma({\cal K}X,Y) = -\sigma({\cal K}^2 X, {\cal K}Y) = -\sigma(X,{\cal K}Y).
$$
We shall also let $\CC:\comp^{2n} \rightarrow \comp^{2n}$ stand for the involution defined by the operation of complex conjugation. We have the following result.

\begin{prop}
Assume that the quadratic form $q$ with $\Re q>0$ is such that $\overline{q({\cal K}(x,\xi))} = q(x,\xi)$, $(x,\xi)\in \real^{2n}$. Then the fundamental matrix
$F$ of $q$ satisfies
\begeq
\label{PTF}
F = - {\cal K} \CC \circ F\circ {\cal K} \CC.
\endeq
\item
Furthermore, the eigenvalues of $F$ are symmetric with respect to the imaginary axis.
\end{prop}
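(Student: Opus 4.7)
The plan is to translate the scalar $\P\T$-identity $\overline{q(\mathcal{K}X)}=q(X)$, $X\in\real^{2n}$, into a bilinear identity, and then feed it through the defining relation $q(X,Y)=\sigma(X,FY)$ to read off a symmetry property of $F$. First I would polarize, obtaining $\overline{q(\mathcal{K}X,\mathcal{K}Y)}=q(X,Y)$ for real $X,Y$, where on the left $q$ denotes the symmetric bilinear polarization. To extend to $\comp^{2n}$, note that the map $(X,Y)\mapsto \overline{q(\mathcal{K}\CC X,\mathcal{K}\CC Y)}$ is complex bilinear, since the two antilinearities coming from $\CC$ cancel the conjugation; as it agrees with $q(X,Y)$ on the real subspace, by uniqueness of complex-bilinear extension one gets
\[
q(X,Y)=\overline{q(\mathcal{K}\CC X,\mathcal{K}\CC Y)},\quad X,Y\in\comp^{2n}.
\]

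Next I would insert $q(X,Y)=\sigma(X,FY)$ into this identity. Using that $\sigma$ has real coefficients, $\overline{\sigma(A,B)}=\sigma(\CC A,\CC B)$, so the right-hand side becomes $\sigma(\CC\mathcal{K}\CC X,\CC F\mathcal{K}\CC Y)$. Since $\mathcal{K}$ is a real-linear map, it commutes with $\CC$, so $\CC\mathcal{K}\CC=\mathcal{K}$, giving $\sigma(\mathcal{K}X,\CC F\mathcal{K}\CC Y)$. Now I apply the antisymplecticity identity $\sigma(\mathcal{K}A,B)=-\sigma(A,\mathcal{K}B)$ noted in the preceding paragraph of the paper, which pulls $\mathcal{K}$ to the second slot at the cost of a minus sign, producing $-\sigma(X,\mathcal{K}\CC F\mathcal{K}\CC Y)$. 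Comparing with $\sigma(X,FY)$ and invoking the non-degeneracy of $\sigma$ on $\comp^{2n}$, one concludes $F=-\mathcal{K}\CC\circ F\circ \mathcal{K}\CC$, which is (\ref{PTF}).

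For the eigenvalue symmetry, set $\widetilde K:=\mathcal{K}\CC$, an antilinear involution ($\widetilde K^2=\mathcal{K}^2\CC^2=1$). The relation (\ref{PTF}) is equivalent to the intertwining $\widetilde K\,F=-F\,\widetilde K$. If $Fv=\lambda v$, applying $\widetilde K$ and using its antilinearity gives $-F(\widetilde K v)=\widetilde K(\lambda v)=\overline{\lambda}\,\widetilde K v$, hence $F(\widetilde K v)=-\overline{\lambda}\,\widetilde K v$. Iterating the intertwining yields $\widetilde K(F-\lambda)^N=(-1)^N(F+\overline{\lambda})^N\widetilde K$, so $\widetilde K$ sends $\ker(F-\lambda)^N$ bijectively (as a real-linear map) onto $\ker(F-(-\overline{\lambda}))^N$; since both subspaces are complex linear, equality of real dimensions forces equality of complex dimensions, so algebraic multiplicities coincide. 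This gives the desired symmetry of the spectrum of $F$ about the imaginary axis.

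The only real pitfall I anticipate is keeping track of $\comp$-linear versus $\comp$-antilinear steps, and ensuring the antisymplecticity is used exactly once so the crucial minus sign in (\ref{PTF}) is correctly produced; the rest is formal manipulation.
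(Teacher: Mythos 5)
Your proof is correct and follows essentially the same route as the paper: both establish the complex-bilinear identity $q(X,Y)=\overline{q(\mathcal{K}\CC X,\mathcal{K}\CC Y)}$ (the paper by explicit computation with the matrices $A,B,C$, you by polarization and uniqueness of the bilinear extension) and then push it through $q(X,Y)=\sigma(X,FY)$ using the antisymplecticity of $\mathcal{K}$, with the same anticommutation argument for the spectral symmetry. Your explicit verification that algebraic multiplicities are preserved is a small but welcome addition beyond what the paper records.
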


\begin{proof}
Let us write,
$$
q(x,\xi) = Ax \cdot x + 2  Bx \cdot \xi + C\xi \cdot \xi.
$$
Here $A$, $B$, and $C$ are complex $n\times n$ matrices, with $A$ and $C$ symmetric. The condition of the $\P\T$--symmetry,
$$
\overline{q(\kappa(x),-\kappa^t(\xi))} = q(x,\xi),\quad (x,\xi)\in \real^{2n},
$$
implies then that the matrices $A$, $B$, and $C$ satisfy
\begeq
\label{eq_ABC}
\overline{A} = \kappa^t A \kappa, \quad \overline{B} = -\kappa B \kappa, \quad \overline{C} = \kappa C \kappa^t.
\endeq

\medskip
\noindent
When establishing the property (\ref{PTF}), we introduce the symmetric bilinear form associated to $q$, given by
$$
q((x,\xi),(y,\eta)) = Ax \cdot y + Bx \cdot \eta + By \cdot \xi + C\xi \cdot \eta,\quad (x,\xi)\in \comp^{2n},\,\,
(y,\eta)\in\comp^{2n}.
$$
Using (\ref{eq_ABC}), we see that
$$
\overline{q({\cal K}(x,\xi),{\cal K}(y,\eta))} = q(\CC(x,\xi),\CC(y,\eta)),
$$
and therefore, from (\ref{eq1.3}) we get
$$
\sigma(\CC X, F\CC Y) = \overline{\sigma({\cal K}X, F {\cal K} Y)} = - \overline{\sigma(X, {\cal K} F {\cal K} Y)} = -
\sigma(\CC X, \CC {\cal K}F {\cal K}Y).
$$
Here we have also used that the map ${\cal K}$ is skew-symmetric with respect to $\sigma$. The identity (\ref{PTF}) follows, and it only remains to check the
symmetry of the eigenvalues of $F$ with respect to the imaginary axis. To that end, assume that $\lambda \in \comp$ is such that $F X = \lambda X$,
$X \in \comp^{2n}$. If $Y = {\cal K} {\cal C} X$, then $\left(F\circ {\cal K} {\cal C}\right) Y = \lambda {\cal K} {\cal C} Y$. Applying the antilinear map
${\cal K} {\cal C}$ and using (\ref{PTF}), we conclude that $F Y = -\overline{\lambda} Y$. The proof is complete.
\end{proof}

\bigskip
\noindent
{\bf Remark 2.2}. Combining Proposition 2.1 with the explicit description of the spectrum of the ${\cal PT}$--symmetric operator ${\rm Op}^w(q)$, given by
(\ref{eq2.1}), we conclude that if the spectrum is purely real, then the eigenvalues $\lambda_j$, $1 \leq j \leq n$, are purely imaginary.

\section{Similarity transformation and proof of Theorem \ref{mth}}
\setcounter{equation}{0}
Assume that $q: \real^{2n}\rightarrow \comp$ is a quadratic form such that $\Re q>0$, and let $F$ be the fundamental matrix of $q$, introduced in (\ref{eq1.3}).
When $\lambda \in {\rm Spec}(F)$, we let
\begeq
\label{eq3.1}
E_{\lambda} = {\rm Ker}((F-\lambda)^{2n})\subset \comp^{2n}
\endeq
be the spectral subspace corresponding to the eigenvalue $\lambda$. According to Lemma 21.5.2 of~\cite{H_book}, the complex symplectic form $\sigma$ is
nondegenerate viewed as a bilinear form on $E_{\lambda}\times E_{-\lambda}$.

\medskip
\noindent
Let us introduce the unstable linear manifold for the Hamilton flow of the quadratic form $i^{-1}q$, given by
\begeq
\label{eq3.2}
\Lambda^+  := \bigoplus_{{\rm Im}\, \lambda>0} E_{\lambda}\subset \comp^{2n}.
\endeq
According to Proposition 3.3 of~\cite{Sj74}, the complex Lagrangian plane $\Lambda^+$ is strictly positive in the sense that
\begeq
\label{eq3.3}
\frac{1}{i} \sigma(X,\overline{X})>0,\quad 0\neq X\in \Lambda^+.
\endeq
We refer to Appendix A for a discussion of positivity conditions in $\comp^{2n}$. Defining also
\begeq
\label{eq3.4}
\Lambda^-  = \bigoplus_{{\rm Im}\, \lambda<0} E_{\lambda}\subset \comp^{2n},
\endeq
we know from the arguments of~\cite{Sj74} that the complex Lagrangian plane $\Lambda^-$ is strictly negative in the sense that
\begeq
\label{eq3.5}
\frac{1}{i} \sigma(X,\overline{X})<0,\quad 0\neq X\in \Lambda^-.
\endeq
It also follows from (\ref{eq1.3}) that the quadratic form $q$ vanishes when restricted to $\Lambda^{\pm}$.

\bigskip
\noindent
Our proof of Theorem \ref{mth} will proceed by exhibiting a complex linear canonical transformation which will reduce $\Lambda^+$ to
$\{(x,\xi)\in \comp^{2n};\, \xi=0\}$ and $\Lambda^-$ to $\{(x,\xi)\in \comp^{2n};\, x=0\}$. We shall then be able to quantize the canonical
transformation by means of an FBI--Bargmann transform~\cite{Sj82}, which will essentially provide us with the sought similarity operator.
Let us notice here that the use of such canonical transformations has a long and rich tradition, see for
instance~\cite{MeSj78},~\cite{HeSj84},~\cite{Sj86}. The discussion below will follow Section 2 of~\cite{HiSjVi} closely.

\bigskip
\noindent
Although not necessary, it will be convenient to simplify $q$ first by means of a suitable real linear canonical transformation. When doing so, we observe
that an application of Example A.6 in Appendix A shows that the negative Lagrangian $\Lambda^-$ is of the form
$$
\eta = A_-y,\quad y\in \comp^n,
$$
where the complex symmetric $n\times n$ matrix $A_-$ is such that $\Im A_-<0$. Here $(y,\eta)$ are the standard canonical coordinates on
$T^*\real^n_y=\real^n_y\times \real^n_{\eta}$, extended to the complexification $T^*\comp^n_y$. Using the real linear canonical transformation
$(y,\eta)\mapsto (y, \eta - (\Re A_-)y)$, we reduce $\Lambda^-$ to the form $\eta = i\Im A_-y$, and by a diagonalization of $\Im A_-$, we
obtain the standard form $\eta = -iy$. After this real linear canonical transformation and the conjugation of the quadratic operator ${\rm Op}^w(q)$
by means of the corresponding metaplectic operator, unitary on $L^2(\real^n)$, we may assume that $\Lambda^-$ is of the form
\begeq
\label{eq3.6}
\eta = -iy,\quad y\in \comp^n.
\endeq
We refer to the Appendix to Chapter 7 of~\cite{DiSj} for the introduction to and basic properties of metaplectic operators. Notice also that according
to Example A.6, in the new real symplectic coordinates, extended to the complexification, the positive complex Lagrangian $\Lambda^+$ is of the form
\begeq
\label{eq3.7}
\eta = A_+ y,\quad \Im A_+>0.
\endeq

\bigskip
\noindent
Let
\begeq
\label{eq3.8}
B = B_+ = (1-iA_+)^{-1}A_+,
\endeq
and notice that the matrix $B$ is symmetric. The holomorphic quadratic form
\begeq
\label{eq3.9}
\varphi(x,y)  = \frac{i}{2}(x-y)^2 - \frac{1}{2} Bx \cdot x,\quad (x,y)\in \comp^{2n},
\endeq
satisfies $\Im \varphi''_{y,y}>0$ and ${\rm det}\,\varphi''_{x,y}\neq 0$. It gives rise therefore to the FBI--Bargmann transformation,
\begeq
\label{eq3.10}
Tu(x) = C \int_{{\bf R}^n} e^{i\varphi(x,y)} u(y)\,dy,\quad x\in \comp^n,\quad C>0,
\endeq
and following~\cite{Sj82},~\cite{Sj95}, the operator $T$ is to be viewed as a Fourier integral operator, with the associated complex linear canonical
transformation of the form
\begeq
\label{eq3.11}
\kappa_T: \comp^{2n} \ni (y,-\varphi'_y(x,y))\mapsto (x,\varphi'_x(x,y))\in \comp^{2n}.
\endeq
Using (\ref{eq3.8}), (\ref{eq3.9}), we see that $\kappa_T$ is given by
\begeq
\label{eq3.12}
\kappa_T: (y,\eta)\mapsto (x,\xi)=(y-i\eta, \eta+iB\eta-By),
\endeq
and therefore, we have $\kappa_T\left(\Lambda_-\right) = \{(x,\xi)\in \comp^{2n};\, x=0\}$, while $\kappa_T(\Lambda^+)$ is given by the equation
$\{(x,\xi)\in \comp^{2n}; \xi=0\}$.

\medskip
\noindent
We know from~\cite{Sj95} that for a suitable choice of $C>0$ in (\ref{eq3.10}), the map $T$ is unitary,
\begeq
\label{eq3.13}
T: L^2(\real^n)\rightarrow H_{\Phi_0}(\comp^n),
\endeq
where
$$
H_{\Phi_0}(\comp^n) = {\rm Hol}\,(\comp^n)\cap L^2(\comp^n; e^{-2\Phi_0}L(dx)),
$$
and $\Phi_0$ is a strictly plurisubharmonic quadratic form on $\comp^n$, given by
\begeq
\label{eq3.14}
\Phi_0(x) = {\rm sup}\,_{y\in {\bf R}^n} \left(-\Im \varphi(x,y)\right) = \frac{1}{2} \left(\left(\Im x\right)^2 + \Im \left(Bx \cdot x \right)\right).
\endeq
From~\cite{Sj95}, we recall also that the canonical transformation $\kappa_T$ in (\ref{eq3.11}) maps $\real^{2n}$ bijectively onto
\begeq
\label{eq3.15}
\Lambda_{\Phi_0}=\left\{\left(x,\frac{2}{i}\frac{\partial \Phi_0}{\partial x}(x)\right); x\in \comp^n\right\}.
\endeq
The positivity of $\{(x,\xi)\in \comp^{2n};\, \xi=0\}$ with respect to $\Lambda_{\Phi_0}$ implies, in view of Proposition A.9 in Appendix A,
that the quadratic form $\Phi_0$ is strictly convex, so that
\begeq
\label{eq3.16}
\Phi_0(x) \sim \abs{x}^2,\quad x\in \comp^n.
\endeq

\bigskip
\noindent
{\bf Example: The standard Bargmann transformation.} Following~\cite{Bar61}, let us consider a complex integral transform of the form (\ref{eq3.10}),
where the phase function $\varphi(x,y)$ is given by
$$
\varphi(x,y) = i \left(\frac{x^2}{2} + \sqrt{2}xy + \frac{y^2}{2}\right).
$$
The corresponding canonical transformation, defined as in (\ref{eq3.11}), is then of the form
$$
(y,\eta) \mapsto \frac{1}{\sqrt{2}} (i\eta -y, -\eta + iy),
$$
while the associated quadratic weight function is given by $\Phi(x) = (1/2) \abs{x}^2$.

\bigskip
\noindent
Returning to the FBI--Bargmann transformation given by (\ref{eq3.9}), (\ref{eq3.10}), let us next recall the exact Egorov theorem,~\cite{Sj95},
\begeq
\label{eq3.17}
T {\rm Op}^w(q) u = {\rm Op}^w(\widetilde{q}) Tu,\quad u\in {\cal S}(\real^n),
\endeq
where $\widetilde{q}$ is a quadratic form on $\comp^{2n}$ given by $\widetilde{q} = q\circ \kappa_T^{-1}$. It follows therefore that
\begeq
\label{eq3.18}
\widetilde{q}(x,\xi) = Mx \cdot \xi,
\endeq
where $M$ is a complex $n\times n$ matrix. We have
$$
H_{\widetilde{q}} =  Mx \cdot \partial_x - M^{t}\xi \cdot \partial_{\xi},
$$
and using (\ref{eq1.4}), we see that the corresponding Hamilton map
$$
\widetilde{F} =\half \begin{pmatrix}
M & 0 \\
0 & -M^{t}
\end{pmatrix}
$$
maps $(x,0)\in \kappa_T(\Lambda^+)$ to $(1/2)(Mx,0)$. Now the maps $F$ and $\widetilde{F}$ are isospectral, and we conclude that, with the agreement of
algebraic multiplicities, the following holds,
\begeq
\label{eq3.19}
{\rm Spec}(M) = {\rm Spec}(2F)\cap \{\Im \lambda>0\}.
\endeq

\bigskip
\noindent
The final step in the normal form construction is the reduction of the matrix $M$ in (\ref{eq3.18}) to its Jordan normal form. Such a reduction
is implemented by considering a complex linear canonical transformation of the form
\begeq
\label{eq3.20}
\kappa_{C}: \comp^{2n}\ni (x,\xi)\mapsto (C^{-1}x, C^{t}\xi)\in \comp^{2n},
\endeq
where $C$ is a suitable invertible complex $n\times n$ matrix. On the operator level, associated to the transformation in (\ref{eq3.20}), we have the
operator
\begeq
\label{eq3.21}
U_C: u(x)\mapsto \abs{{\rm det}\, C} u(Cx),
\endeq
which maps the space $H_{\Phi_0}(\comp^n)$ unitarily onto the space $H_{\Phi_1}(\comp^n)$, where $\Phi_1(x)=\Phi_0(Cx)$ is a strictly plurisubharmonic
quadratic form such that $\kappa_C(\Lambda_{\Phi_0}) = \Lambda_{\Phi_1}$. We notice that the strict convexity property
\begeq
\label{eq3.22}
\Phi_1(x) \sim \abs{x}^2,\quad x\in \comp^n,
\endeq
remains valid.

\medskip
\noindent
We obtain therefore the following result, which only is a slight reformulation of Proposition 2.1 of~\cite{HiSjVi} and is closely related to the discussion
in Section 3 of~\cite{Sj74}.

\begin{prop}
Let $q: \real^{2n} \rightarrow \comp$ be a quadratic form, such that $\Re q>0$. The quadratic differential operator
$$
{\rm Op}^w(q): L^2(\real^n)\rightarrow L^2(\real^n),
$$
equipped with the domain
$$
{\cal D}({\rm Op}^w(q)) = \{u\in L^2(\real^n); \left(x^2+(D_x)^2\right) u\in L^2(\real^n)\},
$$
is unitarily equivalent to the quadratic operator,
$$
{\rm Op}^w(\widetilde{q}): H_{\Phi_1}(\comp^n)\rightarrow H_{\Phi_1}(\comp^n),
$$
with the domain
$$
{\cal D}({\rm Op}^w(\widetilde{q})) = \{u\in H_{\Phi_1}(\comp^n); (1+\abs{x}^2)u\in L^2_{\Phi_1}(\comp^n)\}.
$$
Here
$$
\widetilde{q}(x,\xi) = Mx \cdot \xi,
$$
where $M$ is a complex $n\times n$ block--diagonal matrix, each block being a Jordan one. Furthermore, the eigenvalues of $M$ are precisely those of
$2F$ in the upper half-plane, and the quadratic weight function $\Phi_1(x)$ satisfies,
$$
\Phi_1(x)\sim \abs{x}^2,\quad x\in \comp^n.
$$
The unitary equivalence between the operators ${\rm Op}^w(q)$ and ${\rm Op}^w(\widetilde{q})$ is realized by an operator of the form
$U_1 \circ T\circ U_2:L^2(\real^n) \rightarrow H_{\Phi_1}(\comp^n)$, where $U_2$ is a unitary metaplectic operator on $L^2(\real^n)$, while $T$ is an
FBI--Bargmann transform of the form {\rm (\ref{eq3.9})}, {\rm (\ref{eq3.10})}, and $U_1$ is an operator of the form {\rm (\ref{eq3.21})}.
\end{prop}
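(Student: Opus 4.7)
The plan is to assemble the normal form by composing three unitary operators, each implementing a canonical reduction that has essentially been prepared in the preceding discussion. The strategy is: first reduce $\Lambda^-$ to a standard real coordinate form via a metaplectic operator, then apply an FBI--Bargmann transform to simultaneously straighten $\Lambda^-$ and $\Lambda^+$ to the two coordinate Lagrangians $\{x=0\}$ and $\{\xi=0\}$, and finally conjugate by a $U_C$ of the form (\ref{eq3.21}) to put the resulting matrix $M$ into Jordan normal form.

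First, I would invoke Example A.6 of Appendix A to express $\Lambda^-$ as $\eta = A_-y$ with $A_-$ complex symmetric and $\Im A_-<0$. The real linear canonical transformation $(y,\eta)\mapsto (y,\eta-(\Re A_-)y)$ followed by diagonalization of $\Im A_-$ by a real orthogonal change of variables reduces $\Lambda^-$ to the form (\ref{eq3.6}). I would call the associated metaplectic operator $U_2$, unitary on $L^2(\real^n)$, and after conjugation by $U_2$, the new symbol $q$ has $\Lambda^-$ in the form (\ref{eq3.6}) and, by Example A.6 once more, $\Lambda^+$ of the form (\ref{eq3.7}) with $\Im A_+>0$.

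Second, I would define the matrix $B$ by (\ref{eq3.8}), the phase $\varphi$ by (\ref{eq3.9}), and the Bargmann transform $T$ by (\ref{eq3.10}). A direct inspection of (\ref{eq3.12}) shows $\kappa_T(\Lambda^-) = \{x=0\}$ and $\kappa_T(\Lambda^+) = \{\xi=0\}$. I would then choose the constant $C>0$ in (\ref{eq3.10}) as in~\cite{Sj95} so that $T$ is a unitary map $L^2(\real^n)\to H_{\Phi_0}(\comp^n)$ with $\Phi_0$ given by (\ref{eq3.14}) and strictly convex in view of (\ref{eq3.16}). The exact Egorov theorem (\ref{eq3.17}) gives $T\,{\rm Op}^w(q)T^{-1}={\rm Op}^w(\widetilde q)$ with $\widetilde q = q\circ \kappa_T^{-1}$. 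Since $q$ vanishes on $\Lambda^{\pm}$ (a consequence of (\ref{eq1.3}) and the fact that $FY$ is tangent to each spectral subspace), $\widetilde q$ vanishes on both coordinate Lagrangians $\{x=0\}$ and $\{\xi=0\}$; the only quadratic forms in $(x,\xi)$ vanishing there are those of the shape $\widetilde q(x,\xi) = Mx\cdot\xi$, yielding (\ref{eq3.18}). The spectral identity (\ref{eq3.19}) then follows from the isospectrality of $F$ and the Hamilton map $\widetilde F$ together with the fact that $\widetilde F$ preserves $\kappa_T(\Lambda^+) = \{\xi=0\}$, on which it acts as $x\mapsto \tfrac12 Mx$.

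Third, I would pick an invertible complex matrix $C$ such that $C^{-1}MC$ is block-diagonal with Jordan blocks, define $U_1 = U_C$ by (\ref{eq3.21}), and observe that under conjugation by $U_1$ the symbol $\widetilde q$ is replaced by $(C^{-1}MC)x\cdot \xi$, while the weight function becomes $\Phi_1(x)=\Phi_0(Cx)$. The strict convexity $\Phi_1(x)\sim |x|^2$ is preserved since $C$ is invertible. The unitary chain $U_1\circ T\circ U_2$ then realizes the claimed equivalence, and the description of the domain follows since this chain intertwines the polynomial weight $x^2+D_x^2$ with $1+|x|^2$ on the FBI side via the standard symbol calculus. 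The step requiring the most care is actually quite mild here, namely verifying that $\widetilde q$ has no $x\cdot x$ or $\xi\cdot \xi$ component --- this rests on $q|_{\Lambda^\pm}\equiv 0$, which itself follows from $\sigma(X,FX)=q(X,X)$ combined with $F$-invariance of each $E_\lambda$ and the $\sigma$-isotropy of $\Lambda^{\pm}$ established in Lemma 21.5.2 of~\cite{H_book}.
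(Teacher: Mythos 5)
Your proposal reproduces the paper's own construction almost line for line: the metaplectic pre-conjugation $U_2$ straightening $\Lambda^-$ to $\eta=-iy$ via Example A.6, the FBI--Bargmann transform $T$ with phase (\ref{eq3.9}) sending $\Lambda^\mp$ to the coordinate Lagrangians, the exact Egorov theorem yielding $\widetilde q = Mx\cdot\xi$, the spectral identification (\ref{eq3.19}) via the induced action of $\widetilde F$ on $\{\xi=0\}$, and the final $U_C$-conjugation to Jordan form with $\Phi_1(x)=\Phi_0(Cx)$. Your added remark explaining why $q|_{\Lambda^\pm}\equiv 0$ (namely $q(X,X)=\sigma(X,FX)$, $F$-invariance of $\Lambda^\pm$, and their $\sigma$-isotropy) correctly fills in a step the paper states without justification; the rest matches the paper's argument.
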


\bigskip
\noindent
Let us recall now that $\lambda_1,\ldots \,\lambda_n$ are the eigenvalues of the fundamental matrix $F$ of $q$ in the upper half-plane, repeated according to
their multiplicity. It follows from Proposition 3.1 that the action of the operator ${\rm Op}^w(\widetilde{q})$ on $H_{\Phi_1}(\comp^n)$ is given by
\begeq
\label{eq3.23}
{\rm Op}^w(\widetilde{q}) = \sum_{j=1}^n 2\lambda_j x_j D_{x_j} + \frac{1}{i}\sum_{j=1}^n \lambda_j + \sum_{j=1}^{n-1} \gamma_j x_{j+1} D_{x_j},\quad
\gamma_j \in \{0,1\}.
\endeq

\bigskip
\noindent
Let
$$
\Phi(x)=(1/2) \abs{x}^2,\quad x\in \comp^n,
$$
be the standard radial weight function. We observe that the weighted spaces $H_{\Phi_1}(\comp^n)$ and $H_{\Phi}(\comp^n)$ contain a common dense subset,
namely the space of holomorphic polynomials on $\comp^n$. Indeed, it is well known that the normalized monomials form an orthonormal basis in
$H_{\Phi}(\comp^n)$, while the density of holomorphic polynomials in $H_{\Phi_1}(\comp^n)$ has been explained in~\cite{HiSjVi}, and follows from the fact that
the finite linear combinations of the generalized eigenfunctions of ${\rm Op}^w(q)$ are dense in $L^2(\real^n)$. When restricted to the space of
holomorphic polynomials, the action of the operator ${\rm Op}^w(\widetilde{q})$ on $H_{\Phi_1}(\comp^n)$ is clearly similar to the action of the
closed densely defined operator ${\rm Op}^w(\widetilde{q})$ on the space $H_{\Phi}(\comp^n)$, the corresponding unbounded densely defined similarity transformation
$S: H_{\Phi_1}(\comp^n) \rightarrow H_{\Phi}(\comp^n)$ being the identity map, with ${\cal D}(S)$ being the space of holomorphic polynomials on $\comp^n$.

\medskip
\noindent
Assume now that the ${\cal PT}$--symmetry condition, $\overline{q(\kappa(x),-\kappa^t(\xi))} = q(x,\xi)$, $(x,\xi)\in \real^{2n}$, is maintained, and
that the spectrum of the quadratic operator ${\rm Op}^w(q)$ on $L^2(\real^n)$ is purely real. According to Remark 2.2, we then know that the eigenvalues
$\lambda_j$, $1\leq j \leq n$, are purely imaginary. If furthermore the nilpotent part in the Jordan decomposition of $F$ vanishes, then (\ref{eq3.23})
shows that
\begeq
\label{eq3.24}
{\rm Op}^w(\widetilde{q}) = \sum_{j=1}^n 2\lambda_j x_j D_{x_j} + \frac{1}{i}\sum_{j=1}^n \lambda_j,
\endeq
which, when equipped with the domain,
$$
{\cal D}({\rm Op}^w(\widetilde{q})) = \{u\in H_{\Phi}(\comp^n); (1+\abs{x}^2)u\in L^2_{\Phi}(\comp^n)\},
$$
is seen to be self-adjoint in the Bargmann space $H_{\Phi}(\comp^n)$, by means of a direct computation. This establishes the sufficiency part in Theorem
\ref{mth}.

\bigskip
\noindent
It only remains to verify the necessity in Theorem \ref{mth}. To this end, assume that the ${\cal PT}$--symmetric elliptic quadratic operator
${\rm Op}^w(q)$, $\Re q>0$, is similar to a self-adjoint operator, in the sense of the discussion in Section 1. It follows that for each
$\lambda \in {\rm Spec}({\rm Op}^w(q))$, the corresponding spectral subspace $E_{\lambda}\subset {\cal S}(\real^n)$ consists entirely of eigenvectors. We conclude
then from Proposition 3.1 that the nilpotent part in the Jordan decomposition of $F$ vanishes. The proof of Theorem \ref{mth} is complete.

\medskip
\noindent
{\bf Remark 3.3}. If we drop the ${\cal PT}$--symmetry assumption and merely assume that there are no Jordan blocks in the Jordan normal form of
the fundamental matrix $F$, then it follows from (\ref{eq3.24}) that the operator ${\rm Op}^w(\widetilde{q})$, acting on $H_{\Phi}(\comp^n)$, is normal,
and therefore, the original quadratic operator ${\rm Op}^w({q})$ acting on $L^2(\real^n)$ is similar to a normal operator.

\section{Example: quan\-tum har\-monic oscil\-lator\- with qu\-ad\-ra\-tic com\-plex interaction}
\setcounter{equation}{0}
The purpose of this section is to illustrate Theorem \ref{mth} by applying it to the following two-dimensional quadratic Schr\"odinger operator,
\begeq
\label{eq4.1}
{\rm Op}^w(q)= D_{x_1}^2 + D_{x_2}^2 + \omega_1^2 x_1^2 + \omega_2^2 x_2^2 + 2ig x_1 x_2.
\endeq
Here $\omega_j > 0$, $j=1,2$, $\omega_1 \neq \omega_2$, and $g\in \real$. The operator ${\rm Op}^w(q)$ has been considered in the work~\cite{CIN} as a
quantum model of a non-isotropic two-dimensional harmonic oscillator,  perturbed by an additional quadratic interaction, with a purely imaginary coupling constant.

\medskip
\noindent
The quadratic operator ${\rm Op}^w(q)$ is globally elliptic and $\P\T$--symmetric, with the corresponding involution given by $\kappa(x_1,x_2) = (-x_1,x_2)$.
In~\cite{CIN}, the authors show that a certain method of separation of variables fails to work for ${\rm Op}^w(q)$ precisely when
\begeq
\label{eq4.3}
2g = \pm (\omega_1^2 - \omega_2^2).
\endeq
Furthermore, in the case when (\ref{eq4.3}) holds, it is shown in~\cite{CIN} that the eigenfunctions of ${\rm Op}^w(q)$ do not form a complete set in $L^2(\real^2)$.

\medskip
\noindent
By applying Theorem 1.2, here we shall show the following result.

\begin{prop}
The spectrum of ${\rm Op}^w(q)$ is real precisely when
\begeq
\label{eq_cond1}
-\abs{\omega_1^2 - \omega_2^2} \leq 2g \leq \abs{\omega_1^2 - \omega_2^2},
\endeq
and ${\rm Op}^w(q)$ is similar to a self-adjoint operator if and only if
\begeq
\label{eq_cond2}
-\abs{\omega_1^2 - \omega_2^2} < 2g < \abs{\omega_1^2 - \omega_2^2}.
\endeq
\end{prop}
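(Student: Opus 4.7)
The plan is to reduce both claims in Proposition~4.1 to Theorem~\ref{mth} by a direct linear-algebraic study of the fundamental matrix of $q$. The ellipticity $\Re q>0$ and the $\P\T$--symmetry with $\kappa(x_1,x_2)=(-x_1,x_2)$ are built into the operator, so Theorem~\ref{mth} applies and everything reduces to understanding when ${\rm Spec}(F)$ lies on the imaginary axis and when $F$ has no Jordan blocks. Writing
\[
q(x,\xi)=Ax\cdot x+\xi\cdot\xi,\qquad A=\begin{pmatrix}\omega_1^2 & ig\\ ig & \omega_2^2\end{pmatrix},
\]
formula (\ref{eq1.4}) yields
\[
F=\begin{pmatrix}0 & I\\ -A & 0\end{pmatrix},\qquad F^2=\begin{pmatrix}-A & 0\\ 0 & -A\end{pmatrix},
\]
and an elementary block-determinant computation gives the characteristic polynomial $\det(\lambda I-F)=\det(\lambda^2 I+A)$. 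The map $v\mapsto (v,\nu v)$ identifies eigenvectors of $A$ at $\mu$ with eigenvectors of $F$ at each $\nu$ satisfying $\nu^2=-\mu$, so $F$ has a non-trivial Jordan block at $\nu=\pm\sqrt{-\mu}$ precisely when $A$ does at $\mu$.

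I would then compute
\[
\mu_\pm=\frac{\omega_1^2+\omega_2^2}{2}\pm\sqrt{\Delta},\qquad \Delta=\left(\frac{\omega_1^2-\omega_2^2}{2}\right)^2-g^2,
\]
and invoke Remark~2.2 together with the spectral formula (\ref{eq2.1}) to translate the reality of ${\rm Spec}({\rm Op}^w(q))$ into the condition that both eigenvalues of $F$ in the upper half-plane be purely imaginary; equivalently, both $\mu_\pm$ must be real and strictly positive. Since the sum $\omega_1^2+\omega_2^2$ and product $\omega_1^2\omega_2^2+g^2$ of $\mu_\pm$ are always positive, this reduces to $\Delta\geq 0$, i.e.\ $(\omega_1^2-\omega_2^2)^2\geq 4g^2$, which is exactly (\ref{eq_cond1}). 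Conversely, when $\Delta<0$ the $\mu_\pm$ form a non-real conjugate pair, making $\sqrt{-\mu_\pm}$ not purely imaginary and thereby producing complex values in (\ref{eq2.1}).

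For the similarity question I split the allowed range (\ref{eq_cond1}) into the strict interior and the boundary. If $\Delta>0$ then $\mu_+\neq \mu_-$, so $A$ is diagonalizable, $F$ has four distinct purely imaginary eigenvalues, and Theorem~\ref{mth} gives similarity to a self-adjoint operator. If $\Delta=0$, i.e.\ $2g=\pm(\omega_1^2-\omega_2^2)$, then $A$ has the single double eigenvalue $\mu=(\omega_1^2+\omega_2^2)/2$, and a direct computation produces
\[
A-\mu I=\frac{\omega_1^2-\omega_2^2}{2}\begin{pmatrix}1 & \pm i\\ \pm i & -1\end{pmatrix},
\]
a nonzero rank-one matrix. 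Hence the geometric multiplicity of $\mu$ equals one, $A$ carries a $2\times 2$ Jordan block, and by the correspondence above $F$ inherits a Jordan block at $\nu=\pm i\sqrt{\mu}$. Theorem~\ref{mth} then excludes similarity to a self-adjoint operator, giving (\ref{eq_cond2}). The only step that requires any care is the rank-one verification of $A-\mu I$ at $\Delta=0$, which ensures that the exceptional values $2g=\pm(\omega_1^2-\omega_2^2)$ noted in~\cite{CIN} really correspond to the appearance of Jordan blocks; all remaining steps are elementary and amount to invoking the main theorem.
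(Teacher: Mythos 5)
Your argument is correct and follows essentially the same route as the paper: reduce to Theorem \ref{mth}, compute the eigenvalues of $F$ via the characteristic polynomial $\lambda^4+(\omega_1^2+\omega_2^2)\lambda^2+\omega_1^2\omega_2^2+g^2$, use (\ref{eq2.1}) and Remark 2.2 to identify reality of the spectrum with $\Delta\geq 0$, and detect Jordan blocks exactly at $2g=\pm(\omega_1^2-\omega_2^2)$. The only (harmless) difference is in the last step: you exploit the block form $F=\bigl(\begin{smallmatrix}0&I\\-A&0\end{smallmatrix}\bigr)$ to transfer the Jordan-block question to the $2\times 2$ matrix $A$ and check that $A-\mu I$ has rank one, whereas the paper works directly with the $4\times 4$ matrix $F-\lambda_+I$ and exhibits a nonzero $3\times 3$ minor; your reduction is legitimate because the eigenvector correspondence $v\mapsto(v,\nu v)$ matches geometric multiplicities while $\det(\lambda I-F)=\det(\lambda^2I+A)$ together with $\nu\neq-\nu$ matches algebraic ones.
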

\begin{proof}
The Weyl symbol $q$ of the operator ${\rm Op}^w(q)$ is
\begeq
\label{eq4.31}
q(x,\xi) = \sum_{j=1}^2 \left(\xi_j^2 + \omega_j^2 x_j^2\right) + 2igx_1 x_2,
\endeq
and the corresponding fundamental matrix $F$ is given by
\begeq
\label{eq4.4}
F=\frac{1}{2} \left(\begin{array}{ll} q''_{\xi,x} & q''_{\xi,\xi}\\ -q''_{x,x} & -q''_{x,\xi} \end{array}\right)
=\left(\begin{array}{llll} 0 & 0 & 1 & 0 \\ 0 & 0 & 0 & 1 \\ -\omega_1^2 & -ig & 0 & 0 \\ -ig & -\omega_2^2 & 0 & 0 \end{array}\right).
\endeq
A straightforward computation then shows that
\begeq
\label{eq4.5}
{\rm det}\,(F-\lambda I) = \lambda^4 + (\omega_1^2 + \omega_2^2)\lambda^2 + \omega_1^2 \omega_2^2 + g^2,
\endeq
with the four eigenvalues $\lambda$ being given by
\begeq
\label{eq4.6}
\lambda^2 = -\frac{\omega_1^2 + \omega_2^2}{2} \pm \sqrt{\left(\frac{\omega_1^2-\omega_2^2}{2}\right)^2-g^2}.
\endeq
The expression under the square root sign is strictly less than $((\omega_1^2 + \omega_2^2)/2)^2$, so that the eigenvalues are non-vanishing, as we already know
since $q$ is elliptic. It vanishes precisely when (\ref{eq4.3}) holds, and from (\ref{eq4.6}) we conclude that the eigenvalues are simple precisely when
(\ref{eq4.3}) does not hold. When (\ref{eq4.3}) holds, then the spectrum of $F$ consists of two double eigenvalues,
\begeq
\label{eq4.7}
\lambda_1=\lambda_2 = i\left(\frac{\omega_1^2+\omega_2^2}{2}\right)^{1/2}, \quad \lambda_3=\lambda_4= -i\left(\frac{\omega_1^2+\omega_2^2}{2}\right)^{1/2}.
\endeq

\medskip
\noindent
When (\ref{eq_cond1}) holds, we see that the eigenvalues $\lambda$ of $F$ are on the imaginary axis, so thanks to (\ref{eq2.1}), we know that the spectrum of
${\rm Op}^w (q)$ is real. When (\ref{eq_cond1}) does not hold, the square root in (\ref{eq4.6}) is non-real, so the eigenvalues $\lambda$ are off the
imaginary axis, and ${\rm Op}^w(q)$ has spectrum away from $\real$, as was observed in Remark 2.2.

\medskip
\noindent
From the discussion so far, we get the first statement in the proposition concerning the reality of the spectrum of ${\rm Op}^w(q)$, and also the fact that
(\ref{eq_cond2}) is a sufficient condition for ${\rm Op}^w(q)$ to be similar to a self-adjoint operator. It only remains therefore to show that
${\rm Op}^w(q)$ is not similar to a self-adjoint operator when $2g=\pm (\omega_1^2-\omega_2^2)$, and in view of Theorem \ref{mth}, it suffices to show that Jordan
blocks do occur in $F$ then. In this case, we have two eigenvalues,
$$
\lambda_{\pm} = \pm i\left(\frac{\omega_1^2+\omega_2^2}{2}\right)^{1/2},
$$
each of algebraic multiplicity two. If no Jordan block is present in the spectral subspace of $\lambda_+$, say, then the rank of $F-\lambda_+ I$ is equal to two.
On the other hand, we have, using (\ref{eq4.4}),
\begeq
\label{eq4.8}
F-\lambda_+ I =\left(\begin{array}{llll} -\lambda_+ & 0 & 1 & 0 \\ 0 & -\lambda_+ & 0 & 1 \\ -\omega_1^2 & -ig & -\lambda_+ & 0
\\ -ig & -\omega_2^2 & 0 & -\lambda_+\end{array}\right),
\endeq
and the last three columns are linearly independent, since,
$$
{\rm det} \left(\begin{array}{lll} -\lambda_+ & 0 & 1 \\ -ig & -\lambda_+ & 0\\ -\omega_2^2& 0 & -\lambda_+\end{array}\right) =
-\lambda_+(\lambda_+^2 + \omega_2^2)\neq 0.
$$
The proof is complete.
\end{proof}

\begin{appendix}
\section{Positivity}
\label{pos}
\setcounter{equation}{0}
The purpose of this appendix is to provide a brief but complete account of the relevant aspects of the theory of positive complex Lagrangian planes,
required in the proof of Theorem \ref{mth}. See also Chapter 11 of~\cite{Sj82}, where a more general discussion is given.

\medskip
\noindent
Let $\Sigma \subset {\bf C}^{2n}$ be a real subspace of dimension $2n$ which is symplectic in the sense that
${{\sigma}_\vert}_{\Sigma }$ is a real and nondegenerate 2-form, where $\sigma =\sum_1^n d\xi _j\wedge dx_j$ is the complex symplectic 2-form
on ${\bf C}^{2n}={\bf C}_x^n\times {\bf C}_\xi ^n$. Then $\Sigma $ is maximally totally real in the sense that $\Sigma \cap i(\Sigma )=0$ and the real
dimension $\mathrm{dim}_{{\bf R}}(\Sigma )$ is maximal ($2n$).

\medskip
\noindent
Let $\iota =\iota _\Sigma :{\bf C}^{2n}\to {\bf C}^{2n}$ be the unique
anti-linear map which is equal to the identity on $\Sigma $. Clearly,
$$
\iota ^*\sigma =\overline{\sigma }.
$$

\medskip\par\noindent {\bf Example A.1.}
$\Sigma ={\bf R}^{2n}$, $\iota
(\rho )=\overline{\rho }$ is the usual complex conjugation.

\medskip\par\noindent {\bf Example A.2.}
Let $\Phi =\Phi (x)$ be a real
quadratic form on ${\bf C}^n$ such that the Levi matrix $\partial
_{\overline{x}}\partial _x\Phi $ is nondegenerate. Put
\begin{equation}\label{-1}
\Sigma =\Lambda _\Phi =\left\{ \left(x,\frac{2}{i}\frac{\partial \Phi }{\partial
x}(x)\right);\, x\in {\bf C}^n\right\}.
\end{equation}
Then $\Sigma $ is symplectic. In fact, using $x$ to parametrize
$\Sigma $ we get by a straightforward computation
$$
{{\sigma }_\vert}_{\Lambda _\Phi }=\sum_{j=1}^n \sum_{k=1}^n
\frac{2}{i}\frac{\partial ^2\Phi }{\partial \overline{x}_j\partial
  x_k}d\overline{x}_j\wedge dx_k.
$$
Using only that $\Phi $ is real, we see that this restriction is real. Hence $\Sigma $ is an I-Lagrangian manifold, i.e. a Lagrangian
manifold for the real symplectic form $\Im \sigma $. When the Levi form is nondegenerate, we see from the above expression that
${{\sigma}_\vert}_{\Sigma }$ is nondegenerate.

\bigskip
\noindent
Let $\Psi (x,y)$ be the unique holomorphic quadratic form on ${\bf C}_{x,y}^{2n}$ such that
$$
\Psi (x,\overline{x})=\Phi (x),\ x\in {\bf C}^n.
$$
Differentiating this relation, we get
$$
\partial _x^2\Psi =\partial _x^2\Phi ,\ \partial _x\partial _y\Psi
=\partial _x\partial _{\overline{x}}\Phi ,\ \partial _y^2\Psi
=\partial _{\overline{x}}^2\Phi ,
$$
keeping in mind that all second derivatives of $\Psi $, $\Phi$ are constant.

\medskip
The involution $\iota =\iota _{\Lambda _\Phi }$ is given by
$$
\iota :\ \left(y,\frac{2}{i}\overline{\frac{\partial \Psi }{\partial
    y}(x,\overline{y})}\right)\mapsto \left(x,\frac{2}{i}\frac{\partial \Psi
}{\partial x}(x,\overline{y})\right),
$$
or more explicitly by
\begin{equation}\label{0}
\iota :\ \left(y,\frac{2}{i}(\Phi ''_{x,\overline{x}}\overline{x}+\Phi
''_{x,x}y)\right)\mapsto \left(x,\frac{2}{i}(\Phi ''_{x,x}x+\Phi ''_{x,\overline{x}}\overline{y})\right).
\end{equation}

\medskip
\par
Let $\Lambda \subset {\bf C}^{2n}$ be a ${\bf C}$-Lagrangian space, i.e. a complex $n$-dimensional subspace such
that ${{\sigma }_\vert}_{\Lambda }=0 $. If $\Sigma \subset {\bf C}^{2n}$ is a real symplectic subspace of maximal dimension with
associated involution $\iota =\iota _\Lambda $, we consider the Hermitian bilinear form
\begin{equation}
\label{1}
b(\rho ,\mu )=\frac{1}{i}\sigma (\rho ,\iota (\mu )) \hbox{ on}\quad \Lambda \times \Lambda.
\end{equation}
The form (\ref{1}) was introduced by L. H\"ormander~\cite{H_EnsMath}, in the case when $\Sigma  = \real^{2n}$.

\medskip\par\noindent {\bf Proposition A.3.}
{\it The form {\rm (\ref{1})} is nondegenerate if and only if $\Lambda \cap \Sigma =\{0\}$.}

\begin{proof}
If $0\ne \rho \in \Lambda $ belongs to the radical of $b$, then $\iota (\rho )$ is symplectically orthogonal to $\Lambda $, and consequently,
$\iota (\rho )\in \Lambda$. The vectors $\frac{1}{2}(\rho +\iota (\rho ))$ and $\frac{1}{2i}(\rho -\iota (\rho ))$ belong to $\Sigma \cap \Lambda $
and at least one of them is $\ne 0$, so $\Sigma \cap \Lambda \ne \{0\}$. Conversely $\Sigma \cap \Lambda $ is contained in the radical of
$b$, so we have shown that the radical of $b$ is non-zero precisely when $\Sigma \cap \Lambda \ne \{0\}$.
\end{proof}

\medskip\par\noindent {\bf Example A.4.} If $\Sigma ={\bf R}^{2n}$ and $\Lambda $ is transversal to the fiber
$F=\{ (0,\xi );\, \xi \in {\bf C}^n\}$, then $\Lambda =\Lambda _\phi $: $\xi =\phi '(x)=\phi ''x$, where $\phi $ is a complex holomorphic quadratic form on
${\bf C}^n$. For $(x,\phi '(x))=(x,\phi ''x)\in \Lambda_\phi $ we get
$$
\frac{1}{2i}\sigma (\rho ,\overline{\rho })=(\Im \phi '')x \cdot \overline{x},
$$
where
$$
\Im \phi ''=\frac{1}{2i}(\phi ''-(\phi '')^*),
$$
so the signature of the form (\ref{1}) is equal to that of $\Im \phi''$.

\medskip
\par\noindent {\bf Definition A.5.} Let $\Lambda \subset {\bf C}^{2n}$
be a ${\bf C}$-Lagrangian space and let $\Sigma \subset {\bf C}^{2n}$
be a maximal real symplectic subspace with the corresponding
involution $\iota $. We say that $\Lambda $ is $\Sigma $-positive (or $\Sigma $-negative), if the Hermitian form (\ref{1}) is
positive (respectively negative) definite.

\medskip\par\noindent {\bf Example A.6.} If $\Sigma ={\bf R}^{2n}$ then $\Lambda $ is $\Sigma $-positive if and only if
$\Lambda=\Lambda _\phi $, where $\Im \phi ''>0$.

\begin{proof}
If $\Lambda =\Lambda _\phi $ we already know by Example A.4, that $\Sigma $-positivity of $\Lambda $ is
equivalent to that of $\Im \phi ''$. This proves the if-part of the
proposition. Conversely, if $\Lambda $ is $\Sigma $-positive, then we see that $\Lambda $ is transversal to the fiber $F$, so $\Lambda
=\Lambda _\phi $ and Example A.4 applies again.
\end{proof}

\medskip\noindent
The following result is a corollary of Proposition A.3 and Example A.4.

\medskip\par\noindent {\bf Corollary A.7} {\it Let $\Sigma $ be a fixed
maximal real symplectic subspace of ${\bf C}^{2n}$. Then the set of all
$\Sigma $-positive ${\bf C}$-Lagrangian spaces is a connected
component in the set of all ${\bf C}$-Lagrangian spaces that are
transversal to $\Sigma $.}

\medskip
\noindent
In fact, after applying a suitable complex canonical transformation, we may assume that $\Sigma ={\bf R}^{2n}$ and we see
from Example A.6 that the set of all $\Sigma $-positive ${\bf C}$-Lagrangian spaces is connected. Proposition A.3 then shows that it
is a connected component in the set of all ${\bf C}$-Lagrangian spaces that are transversal to $\Sigma$.

\bigskip
\noindent
Now, let $\Sigma =\Lambda _\Phi $ be as in (\ref{-1}), where
\begin{equation}\label{3}
\partial _{\overline{x}}\partial _x\Phi >0.
\end{equation}

\medskip\par\noindent {\bf Proposition A.8.} {\it The fiber $F=\{(0,\eta   );\, \eta \in {\bf C}^n\}$ is $\Sigma $-negative.}

\begin{proof} Using (\ref{0}) we see that $\iota
(0,\eta )=(x,\xi )$ is given by
\begin{equation}\label{4}
\xi =\frac{2}{i}\Phi ''_{x,x}x,\ x=\frac{1}{2i}(\Phi
''_{\overline{x},x})^{-1}\overline{\eta }.
\end{equation}
It follows that
$$
\frac{1}{i}\sigma (0,\eta ;x,\xi )=-\frac{1}{2} \eta \cdot (\Phi''_{\overline{x},x})^{-1}\overline{\eta} \le -\frac{1}{C}|\eta |^2.
$$
\end{proof}

\medskip\par
As we saw in (\ref{4}), the $\Sigma $-positive space $\iota (F)$ is given by
\begin{equation}
\label{5}
\xi =\frac{2}{i}\Phi ''_{x,x}x=\frac{\partial }{\partial x}\left(\frac{2}{i} \Phi ''_{x,x}x \cdot x \right).
\end{equation}
Here, we notice that $\Phi (x)=\Phi_\mathrm{plh}(x)+\Phi_\mathrm{herm}(x)$, where
$$
\Phi _\mathrm{plh}(x)=\frac{1}{2}\Phi''_{x,x}x \cdot x + \frac{1}{2}\Phi''_{\overline{x},\overline{x}}\overline{x} \cdot \overline{x} =
\frac{1}{2}(\Phi (x)-\Phi (ix))
$$
is the pluriharmonic (plh) part and
$$
\Phi _\mathrm{herm}(x) = \Phi ''_{\overline{x},x}x \cdot \overline{x} = \frac{1}{2}(\Phi (x)+\Phi (ix))
$$
is the (positive definite) Hermitian part. We conclude that the positive ${\bf C}$-Lagrangian space $\iota (F)$ is of the form $\iota
(F)=\Lambda _{\Phi _\mathrm{plh}}$, where $\Phi(x) -\Phi_\mathrm{plh}(x)\asymp |x|^2$.

\medskip\par\noindent {\bf Proposition A.9.} {\it Let $\Sigma =\Lambda_\Phi $ be as in {\rm (\ref{-1})}, {\rm (\ref{3})}. A ${\bf C}$-Lagrangian
space is $\Sigma $-positive if and only if $\Lambda =\Lambda _{\widetilde{\Phi}}$, where $\widetilde{\Phi }$ is pluriharmonic and
$\Phi -\widetilde{\Phi}\asymp |x|^2$. }

\begin{proof} If we decompose the pluriharmonic form $\widetilde{\Phi }$ as
$\widetilde{\Phi }=\Phi _\mathrm{plh}+\widehat{\Phi }$, we see that $\Phi
-\widetilde{\Phi }> 0$ precisely when $|\widehat{\Phi }(x)|<\Phi_\mathrm{herm}(x)$, $x\ne 0$. Consequently, the set
$\{ \Lambda_{\widetilde{\Phi }};\, \widetilde{\Phi }\hbox{ is plh and }\Phi-\widetilde{\Phi} >0\}$ is a connected component of the set of all
${\bf C}$-Lagrangian spaces that are transversal to $\Lambda_\Phi $. It contains $\Lambda_{\Phi _\mathrm{plh}}$ which is $\Lambda_\Phi $-positive,
so by Corollary A.7 it is equal to the set of all ${\bf C}$-Lagrangian spaces, which are $\Lambda_\Phi $-positive.
\end{proof}

\section{Quadratic Fourier integral opera\-tors in the com\-p\-lex do\-ma\-in}
\label{FIO}
The Fourier integral operators encountered in this appendix arise when quantizing complex linear canonical transformations. The following discussion can
therefore be viewed as a linear version of the theory presented in Chapters 3 and 4 in~\cite{Sj82}. See also Chapter 3 of~\cite{Lebeau}.

\medskip\par
We start with the formal theory. A Fourier integral operator is an operator of the form
\begin{equation}\label{8}
Au(x)=\iint e^{i\phi (x,y,\theta )}au(y)dyd\theta ,
\end{equation}
where $a\in {\bf C}$ and $\phi $ is a holomorphic quadratic form on
${\bf C}^{n+n+N}_{x,y,\theta }$. We assume that $\phi $ is a nondegenerate phase function in the sense of~\cite{HoFIO}, so that
\begin{equation}
\label{9}
d\frac{\partial \phi }{\partial \theta _1},...,d\frac{\partial \phi}{\partial \theta _N}\hbox{ are linearly independent,}
\end{equation}
or equivalently that
\begin{equation}
\label{10}\phi ''_{\theta ,(x,y,\theta )}\hbox{ is of rank }N.
\end{equation}

\medskip
\noindent
 We may assume without loss of generality that
\begin{equation}
\label{11}\phi ''_{\theta ,\theta }=0.
\end{equation}
Indeed, if $\phi ''_{\theta ,\theta }\ne 0$, we may assume after a
linear change of coordinates in $\theta $, that $\theta =(\theta
',\theta '')\in {\bf C}^{N-d}\times {\bf C}^d$, $\phi ''_{\theta
  ',\theta }=0$ and that $\det \phi ''_{\theta '',\theta ''}\ne
0$. Then $\theta ''\mapsto \phi (x,y,\theta )$ has a unique critical
point $\theta ''_{c}(x,y)$ which is nondegenerate and integrating out
the $\theta ''$-variables in (\ref{8}), we get formally
\begin{equation}\label{12}Au(x)=\iint e^{i\psi (x,y,\theta ' )}bu(y)dyd\theta' ,\end{equation}
where $b$ is a constant non-vanishing multiple of $a$ and $\psi
(x,y,\theta ')=\phi (x,y,\theta ',\theta ''_c)$ is a nondegenerate
phase function with $\psi ''_{\theta ',\theta '}=0$.

\medskip
\noindent
Assuming (\ref{11}) until further notice, we see that (\ref{10}) becomes
\begin{equation}\label{13}
\mathrm{rank\,}(\phi ''_{\theta ,(x,y)})=N,
\end{equation}
and in particular, $N\le 2n$.

\bigskip
\noindent
Let
\begin{equation}
\label{14}
C_\phi =\{ (x,y,\theta )\in {\bf C}^{2n+N};\, \phi '_\theta (x,y,\theta )=0\}
\end{equation}
This is a subspace of dimension $2n$ and we consider the corresponding
canonical relation
\begin{equation}\label{15}
\kappa :\, (y,\eta )\mapsto (x,\xi ),
\end{equation}
defined by its graph
\begin{equation}\label{16}
\mathrm{graph\,}(\kappa )=\{ (x,\xi ;y,\eta )=(x,\phi '_x(x,y,\theta
);y,-\phi '_y(x,y,\theta ));\, (x,y,\theta )\in C_\phi \}.
\end{equation}
It is easy to check that $\mathrm{dim\, }(\mathrm{graph\,}(\kappa
))=2n$ and that
$$
{{(\sigma _{x,\xi }-\sigma _{y,\eta })}_\vert}_{\mathrm{graph\,}(\kappa
)}=0,
$$
where $\sigma _{x,\xi }-\sigma _{y,\eta }=\sum_1^nd\xi _j\wedge
dx_j-\sum_1^nd\eta  _j\wedge dy_j$.

\bigskip
\noindent
Assume that
\begin{equation}\label{17}
\kappa \hbox{ is a canonical transformation},
\end{equation}
which means that the maps $\mathrm{graph\,}(\kappa )\ni (x,\xi ;y,\eta)\mapsto (x,\xi )\in {\bf C}^{2n}$ and $\mathrm{graph\,}(\kappa )\ni (x,\xi ;y,\eta
)\mapsto (y,\eta )\in {\bf C}^{2n}$ are bijective. Actually, the bijectivity of one of the maps implies that of the other. In fact, the
bijectivity of the first map is equivalent to that of $C_\phi \ni (x,y,\theta )\mapsto (x,\phi '_x(x,y,\theta ))\in {\bf C}^{2n}$ which
is equivalent to that of
$$
{\bf C}^{2n+N}\ni (x,y,\theta )\mapsto (x,\phi '_x,\phi '_\theta )=
(x,\phi ''_{x,x}x+\phi ''_{x,y}y+\phi ''_{x,\theta }\theta ,\phi
''_{\theta ,x}x+\phi ''_{\theta ,y}y)\in {\bf C}^{2n+N}.
$$
This is equivalent to the bijectivity of
$$
\begin{pmatrix}\phi ''_{x,y} &\phi ''_{x,\theta }\\
\phi ''_{\theta ,y} &0\end{pmatrix}.
$$
and implies that $\phi ''_{\theta ,y}$ and $\phi ''_{x,\theta }$ are of maximal rank $N$ and in particular that $N\le n$.

\medskip
\noindent
We can write
\begin{equation}\label{18}
\phi (x,y,\theta )=g(x,y)+\sum_{j=1}^N f_j(x,y)\theta _j,
\end{equation}
where $g$ is quadratic and $f_j$ are linear. Further, $d_xf_1,...,d_xf_N$ are linearly independent and similarly for the
$y$-differentials. The subspace $C_\phi $ is given by the equations $f_j(x,y)=0$, $j=1,..,N$ in ${\bf C}^{2n+N}$ and the same equations in the $x,y$ space
define the projection of $C_\phi $, or equivalently, the projection $\pi _{x,y}(\mathrm{graph\,}(\kappa))$, so $N$ is uniquely determined by $\kappa $. Let
\begin{equation}\label{19}
\widetilde{\phi} (x,y,\widetilde{\theta })=\widetilde{g}(x,y)+\sum_{j=1}^N
  \widetilde{f}_j(x,y)\widetilde{\theta }_j
\end{equation}
be a second nondegenerate phase function that generates the same canonical
transformation $\kappa $. Then $f_1=...=f_N=0$ and
$\widetilde{f}_1=...=\widetilde{f}_N=0$ define the same subspace of
${\bf C}^{2n}$ and after a linear change
of the $\widetilde{\theta }$ coordinates, we may assume that
$$
\widetilde{\phi }=\widetilde{\phi }(x,y,\theta
)=\widetilde{g}(x,y)+\sum_{j=1}^Nf_j(x,y)\theta _j .
$$
The fact that the two phases give rise to the same canonical
transformation now means that there are linear forms $k_j(x,y)$,
$1\le j\le N$ such that
\begin{equation}\label{21}
\widetilde{g}(x,y)=g(x,y)+\sum_{j=1}^N f_j(x,y)k_j(x,y).
\end{equation}
Then
\begin{equation}\label{22}\widetilde{\phi }(x,y,\theta )=\phi
  (x,y,\theta +k(x,y)),\end{equation}
which will imply that $\widetilde{\phi }$ gives rise to the same Fourier integral operators, once we have explained how to choose the
contour of integration in (\ref{8}).

\bigskip
\noindent
Let $\Phi _0$ and $\Phi _1$ be real quadratic forms on (different copies of) ${\bf C}^n$.

\begin{prop} Let $\phi $ be a nondegenerate phase function as above, with the associated canonical transformation $\kappa $.
The following statements are equivalent:
\begin{itemize}
\item[1)] $\kappa (\Lambda _{\Phi _0})=\Lambda _{\Phi _1}$.
\item[2)] The quadratic form
\begin{equation}\label{23}(y,\theta )\mapsto \Phi _0(y)-\Im \phi (0,y,\theta
  )\end{equation}
is nondegenerate so that
\begin{equation}\label{24}
(y,\theta )\mapsto \Phi _0(y)-\Im \phi (x,y,\theta )
\end{equation}
has a unique critical point $(y(x),\theta (x))$ for every $x\in {\bf
  C}^n$. Moreover,
$$
\Phi _1(x)=\mathrm{vc}_{y,\theta }(\Phi _0(y)-\Im \phi (x,y,\theta )).
$$\rm
\end{itemize}
\end{prop}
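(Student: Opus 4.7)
\medskip\noindent
\textbf{Proof proposal.} The plan is to translate the condition $\kappa(\Lambda_{\Phi_0})=\Lambda_{\Phi_1}$ into a statement about critical points of the real quadratic form $G_x(y,\theta):=\Phi_0(y)-\Im\phi(x,y,\theta)$, using the generating-function description of $\kappa$ via $C_\phi$.

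First I would compute the critical-point equations for $G_x$ in $(y,\theta)\in\comp^{n+N}$. Since $\phi$ is holomorphic in $(y,\theta)$ while $\Phi_0$ is real, a short Wirtinger calculation gives
$$
\partial_\theta G_x=\tfrac{i}{2}\phi'_\theta(x,y,\theta),\qquad \partial_y G_x=\partial_y\Phi_0(y)+\tfrac{i}{2}\phi'_y(x,y,\theta),
$$
and the $\partial_{\overline y},\partial_{\overline\theta}$ equations are the complex conjugates. Hence $(y,\theta)$ is critical iff
$$
\phi'_\theta(x,y,\theta)=0,\qquad -\phi'_y(x,y,\theta)=\tfrac{2}{i}\partial_y\Phi_0(y);
$$
i.e.\ $(x,y,\theta)\in C_\phi$ and the corresponding point $(y,\eta)=(y,-\phi'_y)$ of $\mathrm{graph}(\kappa)$ lies in $\Lambda_{\Phi_0}$.

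Next I would exploit that the critical value is real. Assuming a unique critical point $(y(x),\theta(x))$ exists and depends linearly on $x$, the chain rule together with the critical-point equations yields $\partial_x\bigl[\mathrm{vc}_{y,\theta}G_x\bigr]=\tfrac{i}{2}\phi'_x(x,y(x),\theta(x))$. Setting $\widetilde\Phi(x):=\mathrm{vc}_{y,\theta}G_x$, which is automatically a real quadratic form on $\comp^n$, this reads $\bigl(x,\phi'_x|_{\mathrm{crit}}\bigr)=\bigl(x,\tfrac{2}{i}\partial_x\widetilde\Phi(x)\bigr)\in\Lambda_{\widetilde\Phi}$. Thus parametrizing $C_\phi$ by the critical point of $G_x$ produces exactly the image $\kappa(\Lambda_{\Phi_0})$ and identifies it with $\Lambda_{\widetilde\Phi}$.

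The two implications now fall out. For 2)$\Rightarrow$1), nondegeneracy of $G_0$ gives a unique critical point $(y(x),\theta(x))$ for every $x$, linear in $x$, so $\kappa(\Lambda_{\Phi_0})\supseteq\Lambda_{\widetilde\Phi}$; a dimension count forces equality, hence $\widetilde\Phi=\Phi_1$. For 1)$\Rightarrow$2), the bijectivity of the projections $\mathrm{graph}(\kappa)\to\comp^{2n}$ together with the parametrization $\Lambda_{\Phi_1}\ni(x,\tfrac{2}{i}\partial_x\Phi_1(x))$ produces, for each $x$, a unique $(x,y,\theta)\in C_\phi$ whose $\kappa$-image is on $\Lambda_{\Phi_1}$ and whose preimage lies on $\Lambda_{\Phi_0}$; by the first step this is precisely the unique critical point of $G_x$, and applying uniqueness at $x=0$ to the linear homogeneous critical-point system yields the nondegeneracy of $G_0$. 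The one delicate point is the bookkeeping of the factor $2/i$ from the Wirtinger calculus and the observation that $\widetilde\Phi$ is real, so that the identity $\partial_x\widetilde\Phi=\tfrac{i}{2}\phi'_x|_{\mathrm{crit}}$ actually pins it down.
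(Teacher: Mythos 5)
Your argument is correct and takes essentially the same route as the paper: both proofs identify the critical points of $(y,\theta)\mapsto \Phi_0(y)-\Im\phi(x,y,\theta)$ with the points of $C_\phi$ over $\Lambda_{\Phi_0}$ via $\phi'_\theta=0$ and $-\phi'_y=\frac{2}{i}\partial_y\Phi_0$, read off $\phi'_x=\frac{2}{i}\partial_x(\mathrm{vc})$ at the critical point to land on $\Lambda_{\mathrm{vc}}$, and in the converse direction use that a second-order polynomial with a unique critical point is nondegenerate and that two real quadratic forms with the same $\partial_x$-derivative coincide. The only difference is presentational: you make explicit the Wirtinger bookkeeping and the dimension count that the paper leaves implicit.
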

\begin{proof}
We first assume 2). Then at the critical point $(y(x),\theta (x))$, we have
\[
\begin{split}
&\frac{2}{i}\frac{\partial \Phi _0}{\partial
  y}=-\frac{2}{i}\frac{\partial }{\partial y}(-\Im \phi (x,y,\theta
))=-\frac{\partial }{\partial y}\phi (x,y,\theta ),\\
&\frac{\partial }{\partial \theta }\phi (x,y,\theta )=0,\\
&\frac{2}{i}\frac{\partial \Phi _1}{\partial
  x}=\frac{2}{i}\frac{\partial }{\partial x}(-\Im \phi )(x,y,\theta
)=\frac{\partial \phi }{\partial x}(x,y,\theta ),
\end{split}
\]
which means that 1) holds.

\par Now assume 1). Then for every $x\in {\bf C}^n$ there exists a
unique $y\in {\bf C}^n$ such that $x\in \pi _x(\kappa
(y,\frac{2}{i}\frac{\partial \Phi _0}{\partial y}(x)))$. Equivalently
for every $x\in {\bf C}^n$, (\ref{24}) has a unique critical point
$(y(x),\theta (x))$. Since we are dealing with second order polynomials
this critical point is nondegenerate. If $\widetilde{\Phi }_1(x)$ is
the critical value, we also see that
$$
\frac{2}{i}\frac{\partial \Phi _1}{\partial x}=\frac{2}{i}\frac{\partial \widetilde{\Phi }_1}{\partial x}
$$
and since $\Phi _1$ and $\widetilde{\Phi }_1$ are quadratic forms, we
conclude that $\widetilde{\Phi }_1=\Phi _1$.
\end{proof}

\bigskip
\noindent
If $\kappa (\Lambda _{\Phi _0})=\Lambda _{\Phi _1}$ and $\Phi _0$ is pluriharmonic (so that $\Lambda _{\Phi _0}$ is a
${\bf C}$-Lagrangian space), then $\Phi _1$ is also pluriharmonic and the nondegenerate quadratic form (\ref{23}) has the signature $(n+N,n+N)$. In
general, when $\Phi _0(y)$ and hence $\Phi _0(y)-\Im \phi (0,y,\theta)$ is plurisubharmonic, then (\ref{23}) has at most $n+N$
negative eigenvalues, and in order to find a suitable integration contour, when realizing $A$, we wish this
maximal number of negative eigenvalues to be attained. The following result gives a positive answer in the case that we are interested in.

\begin{prop} Assume that $\kappa(\Lambda _{\Phi _0})=\Lambda _{\Phi _1}$, where $\partial_{\overline{z}}\partial _z\Phi _0$ and $\partial
_{\overline{z}}\partial _z\Phi _1$ are positive definite. Then the signature of the quadratic form {\rm (\ref{23})} is equal to $(n+N,n+N)$.
\end{prop}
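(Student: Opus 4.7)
The plan is to treat the pluriharmonic case by a direct linear-algebra computation, and then extend to the general case via a continuity argument along a deformation.

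First, suppose $\Phi_0$ is pluriharmonic ($\partial_{\overline{z}}\partial_z\Phi_0=0$), so that $\Lambda_{\Phi_0}$ is a $\comp$-Lagrangian plane. Then $\Lambda_{\Phi_1}=\kappa(\Lambda_{\Phi_0})$ is $\comp$-Lagrangian as well, forcing $\Phi_1$ pluriharmonic. Writing $\Phi_0=\Re H$ for a holomorphic quadratic form $H\colon\comp^n\to\comp$, one recognizes
\[
\Phi_0(y)-\Im\phi(0,y,\theta)=\Re\bigl[H(y)+i\phi(0,y,\theta)\bigr]
\]
as the real part of a holomorphic quadratic form $F$ on $\comp^{n+N}$. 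The preceding Proposition supplies nondegeneracy of the real form, which is equivalent to nondegeneracy of $F$. Diagonalizing $F=\sum_{j=1}^{n+N}z_j^2$ in suitable complex coordinates, its real part becomes $\sum_j\bigl((\Re z_j)^2-(\Im z_j)^2\bigr)$, of signature $(n+N,n+N)$.

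For the general case with strictly positive $\partial_{\overline{z}}\partial_z\Phi_0$, the plan is to deform along the one-parameter family
\[
\Phi_0^s=\Phi_0^{\mathrm{plh}}+s\,\Phi_0^{\mathrm{herm}},\qquad s\in[0,1],
\]
interpolating between the pluriharmonic part of $\Phi_0$ at $s=0$ and $\Phi_0$ itself at $s=1$. Define $\Phi_1^s$ by $\kappa(\Lambda_{\Phi_0^s})=\Lambda_{\Phi_1^s}$ whenever the right-hand side is a well-defined I-Lagrangian of that form; by continuity its Hermitian part remains positive definite in a neighborhood of $s=1$. Setting $Q^s(y,\theta)=\Phi_0^s(y)-\Im\phi(0,y,\theta)$, the preceding Proposition yields nondegeneracy of $Q^s$ throughout the admissible interval, so the signature of $Q^s$ is locally constant. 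Pushing the deformation to the pluriharmonic endpoint $s=0$ and invoking the first step, the signature at $s=1$ equals $(n+N,n+N)$.

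The main obstacle is verifying that $\Phi_0^s$ stays admissible throughout the deformation, i.e., that $\kappa(\Lambda_{\Phi_0^s})$ remains transversal to the fiber $\{x=0\}$ and yields $\Phi_1^s$ with strictly positive-definite Hermitian part. The Appendix A framework is decisive here: Proposition A.8 shows that the fiber $\{x=0\}$ is $\Lambda_{\Phi_1^s}$-negative whenever $\Phi_1^s$ has positive-definite Hermitian part; since $\kappa$ intertwines the involutions $\iota_{\Lambda_{\Phi_0^s}}$ and $\iota_{\Lambda_{\Phi_1^s}}$, the preimage $\kappa^{-1}(\{x=0\})$ is then $\Lambda_{\Phi_0^s}$-negative, and Proposition A.3 gives the transversality $\Lambda_{\Phi_0^s}\cap\kappa^{-1}(\{x=0\})=\{0\}$ that characterizes admissibility. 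Should the specific pluriharmonic part $\Phi_0^{\mathrm{plh}}$ make $Q^0$ degenerate, a small pluriharmonic perturbation of $\Phi_0^{\mathrm{plh}}$ produces a nondegenerate pluriharmonic reference to which the first step still applies, and by connectedness of the admissible set of $\Phi_0$'s the signature is constant on the relevant connected component.
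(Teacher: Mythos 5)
The high-level strategy --- deform $\Phi_0$ to a pluriharmonic weight where the signature is evident, then propagate by nondegeneracy of the quadratic form (\ref{23}) --- matches the paper's, and your computation in the pluriharmonic case (real part of a nondegenerate holomorphic quadratic form has signature $(n+N,n+N)$) is correct. But there is a genuine gap in the deformation step, which you yourself identify as ``the main obstacle'' and then argue circularly. You need to show that along $\Phi_0^s = \Phi_0^{\mathrm{plh}} + s\,\Phi_0^{\mathrm{herm}}$ the image $\kappa(\Lambda_{\Phi_0^s})$ stays of the form $\Lambda_{\Phi_1^s}$ with $\Phi_1^s$ strictly plurisubharmonic for every $s\in(0,1]$. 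The argument you offer via Propositions A.8 and A.3 presupposes exactly this: A.8 gives negativity of the fiber with respect to $\Lambda_{\Phi_1^s}$ \emph{given} that $\partial_{\bar z}\partial_z\Phi_1^s>0$, and the resulting transversality of $\kappa^{-1}(\{x=0\})$ with $\Lambda_{\Phi_0^s}$ is precisely equivalent to, not a proof of, the admissibility you need. Nothing in your argument rules out that the $b$-signature of the fixed $\comp$-Lagrangian $\kappa^{-1}(\{x=0\})$ with respect to $\Lambda_{\Phi_0^s}$ changes sign at some intermediate $s$, nor does ``by continuity ... in a neighborhood of $s=1$'' or the final connectedness remark bridge the interval down to $s=0$.

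The paper overcomes precisely this by choosing a more structured deformation: a positive definite quadratic form $p$ on $\Lambda_{\Phi_0}$ is extended holomorphically, and one sets $\Lambda_{\Phi_{0,t}} = \exp(-itH_p)(\Lambda_{\Phi_0})$ and, on the other side, $\Lambda_{\Phi_{1,t}}=\exp(-itH_q)(\Lambda_{\Phi_1})$ with $q=p\circ\kappa^{-1}$. The intertwining $\kappa(\Lambda_{\Phi_{0,t}})=\Lambda_{\Phi_{1,t}}$ then holds automatically from $q\circ\kappa=p$, and positivity of $p$ on $\Lambda_{\Phi_0}$ (resp. $q$ on $\Lambda_{\Phi_1}$) makes $t\mapsto\Phi_{0,t}$ (resp. $\Phi_{1,t}$) a \emph{decreasing} family of strictly plurisubharmonic quadratic forms via the eikonal equation (\ref{25}); the paper then derives $t$-independent two-sided bounds separately on the Hermitian and pluriharmonic parts to rule out blow-up and ensure the weights exist, stay strictly plurisubharmonic for all $t\geq0$, and converge exponentially to the pluriharmonic limits $\Phi_{0,+\infty}$, $\Phi_{1,+\infty}$. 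Note also that this pluriharmonic endpoint is the generating function of the unstable manifold of the flow, not the pluriharmonic part $\Phi_0^{\mathrm{plh}}$ of $\Phi_0$; the two generally differ, and your proposed replacement does not carry the monotonicity that makes the paper's argument close.
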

\begin{proof}
Let $p$ be a positive definite quadratic form on $\Lambda _{\Phi _0}$ and let $p$ also denote the
holomorphic extension to ${\bf C}^{2n}$ (which is a well defined holomorphic quadratic form since $\Lambda _{\Phi _0}$ is maximally
totally real). Then the spectrum of the fundamental matrix $F_p$ is given by the eigenvalues $\pm i\mu _1,...,\pm i\mu _n$, where $\mu
_j>0$. Let $\Lambda _{0,+\infty }$ denote the spectral subspace of ${\bf C}^{2n}$ corresponding to $i\mu _1,...,i\mu _n$. Then we know from~\cite{Sj74}
that $\Lambda_{0,+\infty}$ is positive with respect to $\Lambda_{\Phi _0}$, so that, according to Proposition A.9, $\Lambda _{0,+\infty }$ is of the form
$\xi = \frac{2}{i}\frac{\partial \Phi _{0,+\infty }}{\partial x}(x)$, where $\Phi _{0,+\infty }$ is a pluriharmonic quadratic form such that
$\Phi _0(x)-\Phi_{0,+\infty }(x)\asymp |x|^2$.

\medskip
\noindent
The spectral subspace $\Lambda_{0,+\infty}$ can be viewed as the unstable manifold for the $H_{-ip}$-flow and if we put
$\Lambda _{0,t}=\exp(-itH_p)(\Lambda_{\Phi _0})$, for $t\in \real$, then $\Lambda _{0,t}$ is I-Lagrangian
and R-symplectic, and $\Lambda _{0,t}$ converges to $\Lambda_{0,+\infty }$ exponentially fast. Furthermore, at least for small
$t>0$, we have $\Lambda _{0,t}=\Lambda _{\Phi _{0,t}}$, where $\Phi_{0,t}(x)$ is the real quadratic form obtained by solving the eikonal
equation
\begin{equation}
\label{25}
\frac{\partial }{\partial t}\Phi _{0,t}(x)+\Re p\left(x,\frac{2}{i}\frac{\partial }{\partial x}\Phi _{0,t}(x)\right)=0.
\end{equation}
See for instance,~\cite{SjHokk}, as well as Remark 11.7 in~\cite{HeSjSt}, for more details. Notice here that since $p$ is
constant along the $H_{-ip}$ trajectories, we have $\Re p=p$ in (\ref{25}) and
$p\left(x,\frac{2}{i}\frac{\partial }{\partial x}\Phi_{0,t}(x)\right)\asymp |x|^2$, non-uniformly with respect to $t$.

\medskip
\noindent
A priori it is not clear that we can solve (\ref{25}) for all $t\ge 0$. However, as long as the solution exists, we see that
$t\mapsto \Phi _{0,t}$ is a decreasing family of strictly plurisubharmonic quadratic forms, so that $\Phi _{0,t}(x)\le \Phi _0(x)$. For such
a $t$, we write $\psi := \Phi _{0,t}=\psi _{\mathrm{herm}}+\psi _\mathrm{plh}$, where
$\psi_{\mathrm{herm}} (x)=\frac{1}{2}(\psi (x)+\psi (ix))$ is the
Hermitian part and $\psi_{\mathrm{plh}} (x)=\frac{1}{2}(\psi (x)-\psi
(ix))$ is the pluriharmonic part. Similarly, we write $\phi := \Phi _0(x)=\phi_\mathrm{herm}+\phi _\mathrm{plh}$. From $\psi \le \phi $ we get
$0\le \psi _\mathrm{herm}\le \frac{1}{2}(\phi(x)+\phi (ix))= \phi _\mathrm{herm}$. Also,
\[
\begin{split}
\psi _\mathrm{plh}(x)\le (\phi _\mathrm{herm}(x)-\psi
_\mathrm{herm}(x))+\phi _\mathrm{plh}(x)\le \phi (x),\\
-\psi _\mathrm{plh}(x)=\psi_\mathrm{plh}(ix)\le \phi (ix),
\end{split}
\]
so that
$$
-\phi (ix)\le \psi _\mathrm{plh}(x)\le \phi (x).
$$
This means that we have $t$-independent upper and lower bounds on $\Phi _{0,t}$, which prevent an explosion in the eikonal
equation (\ref{25}). Consequently, $\Lambda _{0,t}=\Lambda _{\Phi _{0,t}}$ for all $0\le t<\infty$, and the exponential convergence of $\Lambda _{0,t}$
to $\Lambda _{0,+\infty }$ when $t\to +\infty $ shows that
$$
\Phi _{0,t}\to \Phi _{0,+\infty }\hbox{ exponentially fast, when }t\to
+\infty .
$$

\medskip
\noindent
Let us define $q$ on ${\bf C}^{2n}$ by $q\circ \kappa =p$. Replacing $(\Phi _0,p)$ in the above discussion by $(\Phi _1,q)$, we get a
decreasing family of strictly plurisubharmonic quadratic forms $\Phi _{1,t}$ for $0\le t<+\infty $, converging to the pluriharmonic
form $\Phi _{1,+\infty }$ when $t\to+\infty $, with the property that
$$
\kappa (\Lambda _{\Phi _{0,t}})=\Lambda _{\Phi _{1,t}},\ 0\le t\le +\infty .
$$
Proposition B.1 now shows that
$$
(y,\theta )\mapsto \Phi _{0,t}(y)-\Im \phi (0,y,\theta )
$$
is a nondegenerate quadratic form for $0\le t\le +\infty $, necessarily
with $t$-independent signature. When $t=+\infty $ the signature is
equal to $(n+N,n+N)$ so this is also the case when $t=0$, which gives
the proposition.
\end{proof}

\bigskip
\noindent
Now let us recall the representation (\ref{18}), where $\phi ''_{\theta,y}$ and $\phi ''_{\theta ,x}$ are of maximal rank $N$. After
separate linear changes of the $x$ and $y$ coordinates, we may assume, in order to simplify the notation only, that
\begin{equation}
\label{26}
\phi (x,y,\theta )=g(x,y)+\sum_{j=n-N+1}^N(x_j-y_j)\theta
_j=g(x,y)+ (x''-y'') \cdot \theta,
\end{equation}
writing $x=(x',x'')\in {\bf C}^{n-N}\times {\bf C}^N$ and similarly for $y$.
We see directly that
$$
-\Im \phi (0,0,y'',\theta )+\Phi _0(0,y'')=-\Im
g(0,0,y'')+\Phi _0(0,y'')+\Im \left(y'' \cdot \theta\right)
$$
is a nondegenerate quadratic form of signature $(N,N)$. More
generally, the second order polynomial
\begin{equation}\label{26.5}
(y'',\theta )\mapsto -\Im \phi (x,y,\theta )+\Phi _0(y)=-\Im
g(x,y)+\Phi _0(y)-\Im \left((x''-y'') \cdot \theta\right)
\end{equation}
has a unique critical point of signature $(2N,2N)$. The critical point $(y''_c(x,y'),\theta _c(x,y'))$ is given by $y''_c=x''$, $\theta
_c=\frac{2}{i}\frac{\partial \Phi _0}{\partial
  y''}(y',x'')+\frac{\partial g}{\partial y''}(x,y',x'')$ and the
critical value is
\begin{equation}\label{27}
-\Im \phi (x,y',x'',\theta _c)+\Phi _0(y',x'')=-\Im g(x,y',x'')+\Phi _0(y',x'').
\end{equation}

\medskip
\noindent
On the other hand, by Proposition B.2, we know that
\begin{equation}\label{28}
(y,\theta )\mapsto -\Im \phi (x,y,\theta )+\Phi _0(y)
\end{equation}
has a unique critical point which is of signature $(n+N,n+N)$. It follows that
\begin{equation}\label{29}
y'\mapsto -\Im g(x,y',x'')+\Phi _0(y',x'')
\end{equation}
has a unique critical point which is of of signature $(n-N,n-N)$, and the corresponding critical value is equal to that of (\ref{28}),
namely $\Phi _1(x)$.

\bigskip
\noindent
Let $\gamma \subset {\bf C}_{y'}^{n-N}$ be a real-linear subspace of dimension $n-N$ such that
\begin{equation}\label{30}
-\Im g(0,y',0)+\Phi _0(y',0)\asymp -|y'|^2,\ y'\in \gamma .
\end{equation}
Then
\begin{equation}\label{31}
-\Im g(x,y',x'')+\Phi _0(y',x'')-\Phi _1(x)\asymp -|y'-y'_c(x)|^2,\ y'\in y_c(x)+\gamma .
\end{equation}

\medskip
\noindent
Let us consider the contour $\Gamma (x)\subset {\bf C}_{y,\theta
}^{n+N}$ of real dimension $n+N$, given by
\[
y'\in y'_c(x)+\gamma ,\ y''\in {\bf C}^N,\ \theta =\theta _c(x,y')+iC\overline{(x''-y'') },
\]
parametrized by $(y',y'')\in \gamma \times {\bf C}^N$. Along $\Gamma
(x)$ we have
\begin{equation}\label{32}
-\Im \phi (x,y,\theta )+\Phi _0(y)-\Phi _1(x)\asymp
-\left(|y'-y'_c(x)|^2 + |y''-x''|^2 \right),
\end{equation}
provided that we choose $C>0$ large enough.

\medskip
\noindent
Let $u\in H_{\Phi _0}({\bf C}^n)$, and let us realize $Au=A_\Gamma u$ in (\ref{8}) by integrating over $\Gamma (x)$. We get
\begin{equation}\label{33}\begin{split}
e^{-\Phi _1(x)}A_\Gamma u(x)&=\int_{y''\in
  {\bf C}^N}\int_{y'\in y'_c(x)+\gamma } e^{i\psi _\Gamma
  (x,y)}\widetilde{a}(u(y)e^{-\Phi _0(y)})dy'L(dy'')\\&=e^{-\Phi
  _1}A_\Gamma e^{\Phi _0}(ue^{-\Phi _0}),\end{split}
\end{equation}
where $-\Im \psi _\Gamma (x,y)$ is equal to the right hand side in (\ref{32}). From (\ref{32}) we see that the integral in (\ref{33})
converges for each $x\in {\bf C}^n$, and by a standard application of Stokes' formula, we also know that if $\widetilde{\Gamma }$ is a second
contour with the same properties as $\Gamma $, then $A_{\Gamma}u(x)=A_{\widetilde{\Gamma }}u(x)$.

\medskip
\noindent
We now wish to estimate the ${\cal L}(L^2,L^2)$--norm of the effective operator $e^{-\Phi_1}A_\Gamma e^{\Phi _0}$ in (\ref{33}). We know that the map
\begin{equation}\label{34}
x\mapsto y_c(x)=(y_c'(x),x'')
\end{equation}
is bijective. Replacing $x$ by the new $x$ coordinates $(y'_c(x),x'')$, we may reduce ourselves to the case when
$y'_c(x)=x'$. After a simultaneous rotation in the $x'$ and $y'$ variables we may further assume that $\gamma ={\bf R}^{n-N}$. Then
from  (\ref{32}), (\ref{33}), we get
\begin{equation}\label{35}
\begin{split}
&e^{-\Phi _1(x)}A_\Gamma u(x)=\\
&\iint_{t'\in {\bf R}^{n-N}\atop y''\in
  {\bf C}^N}{\cal O}(1)e^{-\frac{1}{C}(|{\rm Re}\,x'-t'|^2+|x''-y''|^2)}e^{-\Phi _0(t'+i{\rm Im}\,x',y'')}u(t'+i\Im x',y'')dt'L(dy'').
\end{split}
\end{equation}
By Schur's lemma,
$$
\Vert e^{-\Phi _1}A_\Gamma u\Vert_{L^2(E_{t'})}\le {\cal O}(1)\Vert
e^{-\Phi _0}u\Vert_{L^2(E_{t'})},
$$
uniformly for $t'\in {\bf R}^{n-N}$, where $E_{t'}=\{ x\in {\bf C}^n;\, \Im x'=t'\}$. Integrating this with respect to $t'$, we get
$$
\Vert e^{-\Phi _1}A_\Gamma u\Vert_{L^2({\bf C}^n)}\le {\cal O}(1)\Vert
e^{-\Phi _0}u\Vert_{L^2({\bf C}^n)}.
$$
We have proved the following result.

\begin{prop} Realizing the operator $A$ by means of the contour $\Gamma$ above, we get a bounded operator
$A=A_\Gamma :H_{\Phi _0}\to H_{\Phi _1}$. We get the same operator if we replace $\Gamma $ by any other contour with the same properties.
\end{prop}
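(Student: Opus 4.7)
\medskip\noindent\emph{Proof proposal.} The plan is to extract both the boundedness of $A_\Gamma$ and its independence of $\Gamma$ from the pointwise Gaussian decay estimate (\ref{32}) on the contour, combined with Schur's lemma for the $L^2$ bound and Stokes' formula for the contour independence.

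First, I would pass to the effective operator $e^{-\Phi _1}A_\Gamma e^{\Phi _0}$ on $L^2(\comp^n)$, whose ordinary $L^2$-boundedness is equivalent to the boundedness of $A_\Gamma:H_{\Phi _0}\to H_{\Phi _1}$. Combining (\ref{33}) with (\ref{32}) yields absolute convergence of the fibre integral defining $A_\Gamma u(x)$, along with a kernel bound that is Gaussian in $|y'-y'_c(x)|$ and $|y''-x''|$. I would then reduce to the normalized form (\ref{35}) by the two coordinate changes indicated in the text: replace $x$ by $(y'_c(x),x'')$, which is a linear bijection by (\ref{34}) since $\kappa$ is a canonical transformation, and then simultaneously rotate $(x',y')$ so that $\gamma=\real^{n-N}$ and $y'_c(x)=x'$. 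The resulting kernel factors as a Gaussian in the real variable $\Re x'-t'$ times a Gaussian in the complex variable $x''-y''$.

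Next I would apply Schur's lemma slicewise. Slicing $\comp^n$ by the affine subspaces $E_{t'}=\{x\in \comp^n;\, \Im x'=t'\}$, $t'\in \real^{n-N}$, both row and column $L^1$-norms of the kernel on $E_{t'}$ are bounded uniformly in $t'$, which yields the fibrewise estimate $\|e^{-\Phi _1}A_\Gamma u\|_{L^2(E_{t'})} \le C\|e^{-\Phi _0}u\|_{L^2(E_{t'})}$. A Fubini integration in $t'$ then produces the global bound $\|A_\Gamma u\|_{H_{\Phi _1}} \le C\|u\|_{H_{\Phi _0}}$, which is the asserted boundedness.

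For the contour-independence statement, the key observation is that $e^{i\phi(x,y,\theta)}\,a\,u(y)\,dy\wedge d\theta$ is a closed holomorphic $(n+N)$-form in $(y,\theta)$: it is $\bar\partial$-closed by holomorphicity of $u$ and $\phi$, and $\partial$-closed by being of top degree in holomorphic coordinates. Given any two admissible contours $\Gamma,\widetilde\Gamma$, I would connect them by a smooth homotopy $\{\Gamma _s\}_{s\in[0,1]}$ of real $(n+N)$-dimensional contours, each satisfying (\ref{32}) uniformly in $s$; an affine interpolation, suitably adjusted to preserve the quadratic decay on the real codirection and in the $\theta$-variable, will suffice. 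Stokes' formula applied on the cylinder $\bigcup _s \Gamma _s$, with the boundary contributions at infinity discarded by the uniform Gaussian decay, then yields $A_\Gamma u(x)=A_{\widetilde\Gamma}u(x)$ for every $x\in \comp^n$. The main obstacle I anticipate is precisely this uniform control of (\ref{32}) along the interpolating family, since the legitimacy of discarding boundary terms at infinity rests on it; once this is secured, both conclusions of the proposition follow cleanly from the preparatory estimates already developed in this appendix.
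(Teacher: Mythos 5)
Your proposal is correct and follows essentially the same route as the paper: reduction to the effective operator, the normalizing coordinate changes leading to the Gaussian kernel bound (\ref{35}), Schur's lemma on the slices $E_{t'}$ followed by integration over the slices, and Stokes' formula for contour independence. The only difference is that you spell out the homotopy/decay-at-infinity details of the Stokes argument, which the paper dismisses as ``a standard application.''
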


\medskip
\noindent
The next result can be proved by the general methods of~\cite{Sj82}.

\begin{prop} Let $(B, \widetilde{\kappa})$ have same general properties as $(A,\kappa )$ and assume that
$\widetilde{\kappa } (\Lambda _{\Phi _1})=\Lambda _{\Phi _2}$, where $\Phi_2$ is a strictly plurisubharmonic quadratic form. Let
$\widetilde{\Gamma } $ be a contour allowing to realize $B=B_{\widetilde{\Gamma }}:H_{\Phi _1}\to H_{\Phi _2}$. Then
there exists a Fourier integral operator $C$ as above with associated canonical transformation $\widetilde{\kappa }\circ \kappa $ and a
contour $\widehat{\Gamma}$ such that $C_{\widehat{\Gamma}}=B_{\widetilde{\Gamma }}\circ A_{\Gamma }$.
\end{prop}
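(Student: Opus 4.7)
\medskip
\noindent
\emph{Proof plan.} The natural ansatz comes from stacking the two integral representations. Writing
\[
Au(x)=\iint e^{i\phi(x,y,\theta)}a\,u(y)\,dy\,d\theta,\qquad
Bv(w)=\iint e^{i\widetilde{\phi}(w,x,\widetilde{\theta})}\widetilde{a}\,v(x)\,dx\,d\widetilde{\theta},
\]
the composition $B_{\widetilde{\Gamma}}\circ A_{\Gamma}$ applied to $u$ is formally an oscillatory integral in the spatial variables $(w,y)$ with fiber variables $(\widetilde{\theta},x,\theta)\in\comp^{\widetilde{N}+n+N}$ and phase
\[
\widehat{\phi}(w,y;\widetilde{\theta},x,\theta):=\widetilde{\phi}(w,x,\widetilde{\theta})+\phi(x,y,\theta).
\]
The plan is to (i)~verify that $\widehat{\phi}$ is a nondegenerate phase with associated canonical relation $\widetilde{\kappa}\circ\kappa$; (ii)~reduce $\widehat{\phi}$ to the normalized form (\ref{11}) by integrating out its degenerate fiber directions via stationary phase; and (iii)~construct a contour $\widehat{\Gamma}(w)$ by stacking $\Gamma(x)$ and $\widetilde{\Gamma}(w)$, and justify the interchange of integrations by Fubini together with a Stokes deformation.

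\medskip
\noindent
Step~(i) is a direct computation. The critical equations $\partial_{\widetilde{\theta}}\widehat{\phi}=0$, $\partial_x\widehat{\phi}=0$, $\partial_\theta\widehat{\phi}=0$ read $\partial_{\widetilde{\theta}}\widetilde{\phi}=0$, $\partial_x\widetilde{\phi}+\partial_x\phi=0$, $\partial_\theta\phi=0$; the first and last cut out $C_{\widetilde{\phi}}$ and $C_\phi$ respectively, while the middle one forces the common intermediate momentum $\xi:=\partial_x\phi=-\partial_x\widetilde{\phi}$. Reading off the end-point momenta $\omega=\partial_w\widetilde{\phi}$ and $\eta=-\partial_y\phi$ identifies the canonical relation of $\widehat{\phi}$ with $\widetilde{\kappa}\circ\kappa$, and the nondegeneracy condition (\ref{10}) for $\widehat{\phi}$ follows from those for $\phi$ and $\widetilde{\phi}$ by inspection of the block structure of the coupled Jacobian. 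For step~(ii), observe that $\widehat{\phi}''_{\widetilde{\theta},\widetilde{\theta}}=0$ and $\widehat{\phi}''_{\theta,\theta}=0$, but $\widehat{\phi}''_{x,x}=\widetilde{\phi}''_{x,x}+\phi''_{x,x}$ need not vanish, so $\widehat{\phi}$ is not a priori of the normalized form (\ref{11}). A linear change of fiber coordinates isolating the nondegenerate block in $x$, followed by the stationary-phase elimination described immediately after (\ref{11}), produces a reduced nondegenerate phase of the form (\ref{18}) which still generates $\widetilde{\kappa}\circ\kappa$.

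\medskip
\noindent
Step~(iii) carries most of the work. Let $\widehat{\Gamma}(w)$ be the iterated contour obtained by first taking $(y,\theta)\in\Gamma(x)$ and then $(x,\widetilde{\theta})\in\widetilde{\Gamma}(w)$. Adding along $\widehat{\Gamma}(w)$ the Gaussian weight estimate (\ref{32}) for $\Gamma$, which controls $-\Im\phi+\Phi_0(y)-\Phi_1(x)$, to the analogous estimate for $\widetilde{\Gamma}$, which controls $-\Im\widetilde{\phi}+\Phi_1(x)-\Phi_2(w)$, makes the intermediate $\Phi_1(x)$ cancel and yields
\[
-\Im\widehat{\phi}(w,y,\widetilde{\theta},x,\theta)+\Phi_0(y)-\Phi_2(w)\asymp -\bigl|(y,\theta)-(y_c,\theta_c)\bigr|^2-\bigl|(x,\widetilde{\theta})-(x_c,\widetilde{\theta}_c)\bigr|^2.
\]
By Fubini the iterated integral $B_{\widetilde{\Gamma}}(A_{\Gamma}u)(w)$ then coincides with a single integral of $e^{i\widehat{\phi}}\widetilde{a}\,a\,u$ over $\widehat{\Gamma}(w)$. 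Proposition~B.2, applied to $\widetilde{\kappa}\circ\kappa$ which sends $\Lambda_{\Phi_0}$ to $\Lambda_{\Phi_2}$, guarantees that after the stationary-phase elimination in step~(ii) the critical quadratic form in the reduced fiber variables has exactly the signature required, so that the deformed contour $\widehat{\Gamma}$ is of the type considered in Proposition~B.3 and realizes $C_{\widehat{\Gamma}}=B_{\widetilde{\Gamma}}\circ A_{\Gamma}$. The main obstacle is the bookkeeping in the coupled stationary-phase reduction and contour deformation; the Gaussian absorption provided by the strict plurisubharmonicity of $\Phi_1$ along the intermediate fiber makes all error terms in Stokes' formula exponentially small, and the whole manipulation is carried out within the general framework of Chapters~3 and 4 of~\cite{Sj82}.
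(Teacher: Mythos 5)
Your proposal is correct and follows essentially the same approach as the paper, which only gives a brief sketch invoking the same ingredients: form the stacked phase $\widehat\phi = \widetilde\phi + \phi$ with fiber variables $(\widetilde\theta, x, \theta)$, observe that it generates $\widetilde\kappa\circ\kappa$, stack the contours to obtain a good contour automatically, and eliminate the superfluous fiber variables by the exact (contour) version of stationary phase. Your write-up is a more explicit fleshing-out — in particular the cancellation of the intermediate weight $\Phi_1(x)$ when adding the Gaussian estimates for $\Gamma$ and $\widetilde\Gamma$, and the appeal to Proposition~B.2 for the signature after reduction — whereas the paper, adopting the opposite order, first notes that it is easy to realize FIOs with a non-minimal number of fiber variables (so the stacked $\widehat\phi$ is already admissible) and only then reduces by stationary phase; the two orderings are cosmetically different but rely on the same mechanism.
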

\begin{proof} We shall only recall the general ideas of the proof. First of all, the elimination of superfluous $\theta$
coordinates above was only in order to simplify the presentation, and it is very easy to realize Fourier integral operators with more than
the minimal number of such fiber variables. Formally $B\circ A$ then becomes such a Fourier integral operator with $\widetilde{\kappa }\circ
\kappa $ as the associated canonical transformation. When composing $B_{\widetilde{\Gamma }}\circ A_\Gamma $ we automatically get a good contour, and as we
have pointed out, this contour can be deformed into any other good contour for $B\circ A$. Superfluous $\theta $ variables can then be eliminated
by the exact version of stationary phase.
\end{proof}

\medskip
\noindent
A special case is when $\kappa =\mathrm{id}$. We then get a multiple of the identity operator. If $\widetilde{\kappa }$ in the
last proposition is equal to $\kappa ^{-1}$, then we conclude that $B\circ A$ is a multiple of the identity, and assuming that the
amplitude $a$ in (\ref{8}) is $\ne 0$, we can choose the amplitude $b$ in a corresponding representation of $B$ in such a way that
$B\circ A=I$. In particular, we see that $A_\Gamma :H_{\Phi _0}\to H_{\Phi _1}$ in Proposition B.3 has a bounded (two-sided) inverse.

\end{appendix}


\begin{thebibliography}{40}

\bibitem{ACS}  A. A. Andrianov, F. Cannata, and A. V. Sokolov,
{\it Nonlinear supersymmetry for non--Hermitian, non--diagonalizable Hamiltonians. {\rm I}. General properties}, Nuclear Phys. {\bf B773} (2007), 107--136.

\bibitem{BaZn} F. Bagarello and M. Znojil, {\it Nonlinear pseudo-bosons versus hidden Hermiticity: {\rm II}. The case of unbounded operators}, J. Phys. A
{\bf 45} (2012), 115311 (13 pp).

\bibitem{Bar61} V. Bargmann, {\it On a Hilbert space of analytic functions and an associated integral transform}, Comm. Pure Appl. Math. {\bf 14} (1961),
187--214.

\bibitem{BB} C. M. Bender and S. Boettcher, {\it Real spectra in non-Hermitian Hamiltonians having ${\cal PT}$--symmetry},
Phys. Rev. Lett. {\bf 80} (1998), 5243–-5246.

\bibitem{BBM} C. M. Bender, S. Boettcher, and P. N. Meisinger, {\it ${\cal PT}$-symmetric quantum mechanics}, J. Math. Phys. {\bf 40} (1999), 2201-–2229.

\bibitem{Boutet} L. Boutet de Monvel, {\it Hypoelliptic operators with double characteristics and related pseudo-differential operators},
Comm. Pure Appl. Math. {\bf 27} (1974), 585-–639.

\bibitem{CGrSj} E. Caliceti, S. Graffi, and J. Sj\"ostrand,
{\it ${\cal PT}$--symmetric non-self-adjoint operators, diagonalizable and non-diagonalizable, with a real discrete spectrum},
J. Phys. A {\bf 40} (2007), 10155-–10170.

\bibitem{CIN} F. Cannata, M. V. Ioffe, and D. N. Nishnianidze, {\it Exactly solvable nonseparable
and nondiagonalizable two-dimensional model with quadratic complex interaction}, J. Math. Phys. {\bf 51} (2010), 022108, 14 pp.

\bibitem{Da1} E. B. Davies, {\it Pseudo-spectra, the harmonic oscillator and complex resonances},
R. Soc. Lond. Proc. Ser. A Math. Phys. Eng. Sci. {\bf 455} (1999), 585-–599.

\bibitem{DeSjZw} N. Dencker, J. Sj\"ostrand, and M. Zworski, {\it Pseudo-\-spectra of se\-mi\-clas\-si\-cal
(pseudo)\-diffe\-ren\-tial ope\-rators}, Comm. Pure Appl. Math. {\bf 57} (2004), 384--415.

\bibitem{DiSj} M. Dimassi and J. Sj\"ostrand, {\it Spectral asymptotics in the semi-classical limit}, Cambridge University Press, 1999.

\bibitem{HeSj84} B.~Helffer and J.~Sj\"ostrand, {\it Multiple wells in the semiclassical limit. {\rm I}.},
Comm. P.D.E. {\bf 9} (1984), 337-–408.

\bibitem{HeSjSt} F. H\'erau, J. Sj\"ostrand, and C. Stolk, {\it Semiclassical analysis for the Kramers-Fokker-Planck equation},
Comm. PDE, {\bf 30} (2005), 689--760.

\bibitem{HiPr1} M. Hitrik and K. Pravda-Starov {\it Spectra and semigroup smoothing for non-elliptic quadratic operators},
Math. Ann., {\bf 344} (2009), 801-–846.

\bibitem{HiSjVi} M. Hitrik, J. Sj\"ostrand, and J. Viola, {\it Resolvent estimates for elliptic quadratic differential operators}, Analysis and PDE, to appear.

\bibitem{H_EnsMath} L. H\"ormander, {\it On the existence and the regularity of solutions of linear pseudo-differential equations}, Ens. Math. {\bf 17}
(1971), 99–-163.

\bibitem{HoFIO} L. H\"ormander, {\it Fourier Integral Operators {\rm I}}, Acta Math. {\bf 127} (1971), 79--183.

\bibitem{H_book}
L. H\"{o}rmander, {\it The analysis of linear partial differential operators} (vol. I--IV), Springer Verlag (1985).

\bibitem{Ho95}
L. H\"{o}rmander, {\it Symplectic classification of quadratic forms, and general Mehler formulas}, Math. Z., {\bf 219} (1995), 413--449.

\bibitem{Lebeau} G. Lebeau, {\it Deuxi\`eme microlocalisation sur les sous-vari\'et\'es isotropes}, Ann. Inst. Fourier {\bf 35} (1985), 145-–216.
58G07 (35A27 46F15)

\bibitem{MeSj78} A. Menikoff and J. Sj\"ostrand, {\it On the eigenvalues of a class of hypoelliptic operators},  Math. Ann. {\bf 235} (1978), 55–-85.

\bibitem{Mo1} A. Mostafazadeh, {\it Pseudo-Hermiticity versus ${\cal PT}$--symmetry: the necessary condition for the reality of the spectrum of a
non-Hermitian Hamiltonian}, J. Math. Phys. {\bf 43} (2002), 205–-214.

\bibitem{Mo2} A. Mostafazadeh, {\it Pseudo-Hermiticity versus ${\cal PT}$--symmetry. {\rm II}. A complete characterization of non-Hermitian Hamiltonians with a real spectrum}, J. Math. Phys. {\bf 43} (2002), 2814–-2816.

\bibitem{Mo3} A. Mostafazadeh, {\it Pseudo-Hermiticity versus ${\cal PT}$--symmetry. {\rm III}. Equivalence of pseudo-Hermiticity and the presence of antilinear symmetries}, J. Math. Phys. {\bf 43} (2002), 3944–-3951.


\bibitem{Sj74} J. Sj\"{o}strand, {\it Parametrices for pseudodifferential operators with multiple characteristics},
Ark. f\"{o}r Mat. {\bf 12} (1974), 85--130.

\bibitem{Sj82} J. Sj\"ostrand, {\it Singularit\'es analytiques microlocales}, Ast\'erisque, {\bf 95} (1982), 1--166,
Soc. Math. France, Paris.

\bibitem{SjHokk} J. Sj\"ostrand, {\it Analytic wavefront sets and operators with multiple characteristics}, Hokkaido Math. Journal, {\bf 12} (1983), 392--433.

\bibitem{Sj86} J. Sj\"ostrand, {\it Semiclassical resonances generated by nondegenerate critical points},
Pseudodifferential operators (Oberwolfach, 1986), 402–-429, Lecture Notes in Math., {\bf 1256}, Springer, Berlin, 1987

\bibitem{Sj95} J. Sj\"ostrand, {\it Function spaces associated to global I-Lagrangian manifolds}, Structure of solutions of differential
equations, Katata/Kyoto, 1995, World Sci. Publ., River Edge, NJ (1996).

\bibitem{SAC} A. V. Sokolov, A. A. Andrianov, and F. Cannata: {\it Non-Hermitian quantum mechanics of non-diagonalizable Hamiltonians: puzzles with self-
orthogonal states}, J. Phys. A {\bf 39} (2006), 10207--10227

\bibitem{Viola2} J. Viola, {\it Non-elliptic quadratic forms and semiclassical estimates for non-selfadjoint operators}, preprint, 2011,
{\sf http://arxiv.org/abs/1109.5045}

\end{thebibliography}
\end{document}